\newtheorem{definition}{Definition}
\begin{document}

\lefttitle{\mvpm: Rule-based Reasoning with Uncertain Observations}

\title{
  MV-Datalog+-: Effective Rule-based Reasoning with Uncertain Observations
}

\begin{authgrp}
\author{\sn{Matthias} \gn{Lanzinger}}
\affiliation{University of Oxford}
\author{\sn{Stefano} \gn{Sferrazza}}
\affiliation{University of Oxford}
\affiliation{Technische Universität Wien}
\author{\sn{Georg} \gn{Gottlob}}
\affiliation{University of Oxford}
\end{authgrp}

\maketitle

\begin{abstract}
  Modern applications combine information from a great variety of sources. Oftentimes, some of these sources, like Machine-Learning systems, are not strictly binary but associated with some degree of (lack of) confidence in the observation.

  We propose \mvdl and \mvpm as extensions of Datalog and \datalogpm, respectively, to the fuzzy semantics of infinite-valued \luka logic \Llog as languages for effectively reasoning in scenarios where such uncertain observations occur.
  We show that the semantics of \mvdl exhibits similar model theoretic properties as Datalog. In particular, we show that (fuzzy) entailment can be decided via minimal fuzzy models. We show that when they exist, such minimal fuzzy models are unique (when they exist) and can be characterised in terms of a linear optimisation problem over the output of a fixed-point procedure. 
  On the basis of this characterisation, we propose similar many-valued semantics for rules with existential quantification in the head, extending \datalogpm. 
  This paper is under consideration for acceptance in TPLP.
\end{abstract}

\begin{keywords}
Artificial intelligence, machine learning, Datalog, fuzzy logic, uncertainty, \datalogpm,
\end{keywords}

\section{Introduction}
Datalog and its extensions have long been important languages in databases and at the
foundation of many rule-based reasoning formalisms. Continuous
theoretical and technical improvements have led to the development of
highly efficient Datalog systems  for widespread practical use in a variety of applications (see e.g,~\cite{DBLP:books/mc/18/MaierTKW18}).

However, in many such real-world applications the observations and facts that serve as the input are not actually certain but rather associated with some (possibly unknown) level of uncertainty. One particularly important situation where such a setting occurs are settings where observations are made by Machine Learning (ML) systems.

For example, say a database contains a relation that labels images of animals with the class of animal observed in the image. In modern settings, such labels are commonly derived from the output of ML systems that attempt to classify the image contents. Generally, such systems output possible labels for an image, together with a score that can be interpreted as the level of confidence in the labels correctness (cf.,~\cite{DBLP:journals/cviu/MishkinSM17,DBLP:journals/corr/abs-1709-03806}).

When the labelling in our example is used in further logical inference, the level of confidence in the individual labels would ideally also be reflected in any knowledge inferred from these uncertain observations. However, classical logical reasoning provides no way to consider such information and requires the projection of uncertainty levels to be either true or false, usually via some simple numerical threshold. Beyond the overall loss of information, this can also lead to problematic outcomes where logical conclusions are true despite their derivation being based on observations made with only moderate confidence. With observations from ML systems becoming a commonplace feature of many modern reasoning settings, the projection to binary truth values is severely limiting the immense potential of integrating ML observations with logic programming.

\paragraph{Related Work}
There are two natural ways to interpret uncertain observations as
described in the example above. One can consider them as the
likelihood of the fact being true, i.e., the level of confidence in a
fact is interpreted as a probability of the fact holding
true. Such an interpretation has been widely studied in the context of Problog~\citep{DBLP:conf/ijcai/RaedtKT07}, Markov Logic Networks~\citep{DBLP:journals/ml/RichardsonD06}, and probabilistic databases~\citep{DBLP:series/synthesis/2011Suciu,DBLP:conf/vldb/CavalloP87}.  However, generally such formalisms make strong assumptions on the pairwise probabilistic independence of all tuples which can be difficult to satisfy. An approach to probabilistic reasoning in \datalogpm based on the chase procedure was proposed very recently by~\cite{DBLP:conf/ruleml/BellomariniLSS20}.

Alternatively, one can express levels of confidence in a fact in
terms of \emph{degrees of truth} as in \emph{fuzzy logics}
(cf. \cite{DBLP:books/kl/Hajek98}). That is, a fact that is
considered to be true to a certain degree, in accordance with the
level of confidence in the observation. 

There have been a wide variety of approaches of combining logic programming with fuzzy logic.
Pioneering work in this area by~\cite{DBLP:journals/actaC/AchsK95}, \cite{DBLP:journals/fss/Ebrahim01}, and \cite{DBLP:journals/fss/Vojtas01} introduced a number of early foundational ideas for fuzzy logic programming. 
The first also considers fuzziness in the object domain (expressing the possibility that two constants may actually refer to the same object), via fuzzy similarity relations on the object domain. This idea was further expanded in the context of deductive databases by~\cite{DBLP:journals/tfs/IranzoS18}.

One limiting factor of these approaches in our intended setting is the reliance on the \emph{G\"odel t-norm} (see, \cite{DBLP:conf/lpar/Preining10}) as the basis for many-valued semantics. The G\"odel t-norm of two formulas $\phi$, $\psi$ is the minimum of the respective truth degrees of $\phi$ and $\psi$. While this simplifies the model theory of the resulting logics, the resulting semantics are not ideal for our envisioned applications in which we want to combine uncertain observations from AI systems with large certain knowledge bases. Further discussion of the important differences between semantics based on the G\"odel t-norm and the \luka t-norm (see,~\cite{DBLP:books/kl/Hajek98}), which we propose to use in this paper, is provided in Section~\ref{sec:prelims}.

Finally, Probabilistic Soft Logic (PSL) was introduced by~\cite{DBLP:journals/jmlr/BachBHG17} as a framework for fuzzy reasoning with semantics based on \luka logic. PSL is a significantly different language with different use cases than our proposed languages. It allows for negation in rule bodies as well as disjunction in the head, in addition to certain forms of aggregation. Semantically it treats all rules as soft constraints and aims to find interpretations that satisfy rules as much as possible. Similar ideas have also been proposed in fuzzy description logics, see e.g.,~\cite{DBLP:journals/kbs/BobilloS16,DBLP:conf/sum/BorgwardtP17,DBLP:journals/jair/StoilosSPTH07}, or \cite{DBLP:journals/ws/LukasiewiczS08} for an overview of fuzzy DLs.

\paragraph{Contributions}
We propose extensions of Datalog and \datalogpm~\cite{},i.e., Datalog with existential quantification in the head, to the many-valued semantics of infinite-valued \luka logic. We argue that these languages offer natural semantics for practical applications with uncertain data while preserving key theoretical properties of the classical (two-valued) case. 
Overall, the contributions of this paper are summarised as follows.

\begin{itemize}
\item We introduce \mvdl as Datalog with \luka semantics. We discuss adaptions of the standard minimal model concept to fuzzy models where every fact takes the least possible truth degree. We show that for consistent inputs, unique minimal models always exist and give a characterisation of such minimal models.
Our results hold also for $K$-fuzzy models, i.e., interpretations that satisfy formulas of the program only to degree $K$ rather than fully.
\item
    We show that \mvdl is suitable for practical applications. For one, our characterisation reveals that finding $K$-fuzzy models is feasible by building on standard methods as from the classical setting. In particular, we also show fuzzy fact entailment in \mvdl has the same complexity as fact entailment in Datalog.
  \item
    We propose \mvpm as practice oriented extension of \mvdl by existential quantification in the head. To obtain intuitive semantics for logic programming we deviate from standard semantics of many valued existential quantification and instead argue for the use of strong existential quantification. We discuss fundamental model theoretic properties and challenges of \mvpm and show how our characterisation result for \mvdl can be extended to capture a natural class of preferred models of \mvpm, providing the foundation for a system implementation of \mvpm.
\end{itemize}

\paragraph{Structure}
The rest of this paper is structured as follows. Terminology, notation and further preliminaries are stated in Section~\ref{sec:prelims}. Throughout Section~\ref{sec:mvdl}, we introduce \mvdl and our main results for minimal fuzzy models in \mvdl. The further extension to \mvpm is presented in Section~\ref{sec:mvpm}.
We conclude with final remarks and a brief discussion of future work in Section~\ref{sec:conclusion}.
Proof details that had to be omitted from the main text due to space restrictions are provided in~\ref{sec:proofs}.

\section{Preliminaries}
\label{sec:prelims}

We assume familiarity with standard notions of Datalog and first-order logic. Throughout this document we consider only logical languages with no function symbols. 
\paragraph{Datalog}
Fix a relational signature $\sigma$ and a (countable) domain $\dom$.
A \emph{(relational) atom} is a term $R(u_1,\dots,u_{\#R)})$ where $\#R$ is the arity of relation symbol $R\in \sigma$ and for $0 \leq i \leq \#R$, $u_i$ is a variable or a constant. If all positions of the atom are constants we say that it is a \emph{ground atom} (or \emph{fact}).
A \emph{(Datalog) rule} $r$ is a first-order implication $\varphi \rightarrow \psi$ where $\varphi$ is a conjunction of relational atoms and $\psi$ consists of a single relational atom. We say that $\varphi$ is the \emph{body} of $r$ ($\body(r)$) and $\psi$ is the \emph{head} of $r$ ($\head(r)$). A rule is \emph{safe} if all free variables in the head of a rule also occur in its body.
A \emph{Datalog program} is a set of safe Datalog rules. The models of a Datalog program are the first-order interpretations that satisfy all rules. For two models $\I, \I'$ of a Datalog program, we say that $\I \leq \I'$ if every every ground atom satisfied by $\I$ is also satisfied by $\I'$. The \emph{minimal model} of a Datalog program $\Pi$ is the model $\I$ of $\Pi$ such that for all other models $\I'$ of $\Pi$ it holds that $\I \leq \I'$.

A \emph{database} $D$ is a set of ground atoms, logically interpreted as a theory consisting of an atomic sentence for each ground atom. We say that $\I$ is a model of a program and a database $(\prog, D)$, if $\I$ is a model of both $\prog$ and $D$. We write $\prog,D \models \varphi$ if formula $\varphi$ is satisfied for every model of $(\prog, D)$. Note that this is equivalent to satisfaction under the minimal model of $(\prog, D)$. 

\paragraph{\datalogpm}
The extension of Datalog with existential quantification in the head, led to the well studied \datalogpm~\cite{DBLP:journals/ws/CaliGL12}.
A \emph{\datalogpm rule} is either a Datalog rule or a first-order formula of the form $\varphi \rightarrow \exists \ybf\ \psi(\ybf)$ where $\varphi$ is again a conjunction of relational atoms and $\psi$ is a relational atom that mentions all variables in $\ybf$. A \emph{\datalogpm program} is a set of safe \datalogpm rules. Models in \datalogpm are defined in the same way as for plain Datalog. Note that existential quantification ranges over the whole (infinite) domain $\dom$ rather than the active domain of the program/database. It follows that, in contrast to Datalog, \datalogpm programs do not have a unique minimal model.
Furthermore, fact entailment in \datalogpm is undecidable in general but vast decidable fragments are known. 

\paragraph{Oblivious Chase}
Fact entailment in \datalogpm (and Datalog) is closely linked to the so-called \emph{chase} fixed-point procedures. In the following we will be particularly interested in a specific chase procedure called the \emph{oblivious chase}. We give only the minimal necessary technical details here and refer to \cite{DBLP:journals/jair/CaliGK13} for full details.

For any formula $\varphi$, we write $\vars(\varphi)$ for the set of \emph{free variables} in the formula.
Given a database $D$, we say that a \datalogpm rule $r$ is \emph{obliviously applicable} to $D$ if there exists a homomorphism $h \colon \vars(r) \to \dom$ such that all atoms in $h(\body(r))$ are in $D$. 

Let $r$ be a \datalogpm rule that is obliviously applicable to $D$ via a homomorphism $h$. Let $h'$ be the extension of $h$ to the existentially quantified variables $\ybf$ in $\head(r)$ such that for each $y \in \ybf$, $h'(y)$ is a fresh constant, called a \emph{labelled null}, not occurring in $D$ ($h'=h$ when $r$ is a Datalog rule). Let $\psi$ be the atomic formula in $\head(r)$.  The result of the \emph{oblivious application} of $r, h$ to $D$ is $D' = D \cup h'(\psi)$. We write $D \xrightarrow[]{r,h} D'$. 
Let $D$ be a database and $\prog$ a \datalogpm program. The \emph{oblivious chase sequence} $\mathit{OChase}(\prog, D)$ is the sequence $D_0, D_1, D_2, \dots$ with $D=D_0$ such that for all $i \geq 0$, $D_i \xrightarrow[]{r_i,h_i} D_{i+1}$ and $r_i \in \prog$, the chase sequence proceeds as long as there are oblivious applications $r,h$ that have not yet been applied. We refer to the limit of the oblivious chase sequence as $\mathit{OLim}(\prog, D)$. Every oblivious application $r,h$ induces a grounding of $r$ where all variables (including the existentially quantified ones) are assigned constants according to $h$ and any existential quantification is removed. We refer to the set of all ground rules induced in this way by the applications in $\mathit{OChase}(\prog, D)$ as $\oground(\prog, D)$. Note that for every $\prog$ and $D$, $\mathit{OLim}(\prog, D)$ is unique up to isomorphism.

\paragraph{\luka Logic}
While there are many semantics for many-valued logic, in this paper we will consider infinite-valued \luka logic \Llog (see,~\cite{DBLP:books/kl/Hajek98}). Some of our techniques are particular to \luka logic and the results given in this paper generally do not apply to other many-valued logics.
For some relational signature $\sigma$, we consider the following logical language, where
$R$ is a relational atom (in $\sigma$) and a \emph{formula} $\varphi$ is defined via the grammar
\[
  \varphi ::= R \mid \varphi \conj \varphi \mid \varphi \disj \varphi \mid \varphi \rightarrow \varphi \mid \neg \varphi
\]

For a signature $\sigma$ and domain $\dom$, let $\gatom$ be the the set of all ground atoms with respect to $\sigma$ and $\dom$. A \emph{truth assignment} is a function $\nu \colon \gatom \to [0,1]$, intuitively assigning a degree of truth in the real interval $[0,1]$ to every ground atom. 
By slight abuse of notation we also apply $\nu$ to formulas to express the truth of ground formulas $\gamma$, $\gamma'$ according to the following inductive definitions.
\[
  \begin{array}{ll}
  \nu(\neg \gamma) &= 1 - \nu(\gamma) \\
  \nu(\gamma \conj \gamma') &= \max \{0, \nu(\gamma) + \nu(\gamma') -1 \} \\
  \nu(\gamma \disj \gamma') &= \min \{1, \nu(\gamma) + \nu(\gamma') \} \\
  \nu(\gamma \rightarrow \gamma')& = \min \{1, 1 - \nu(\gamma) + \nu(\gamma')\}
  \end{array}
\]
That is, $\conj$ is the usual \luka \textit{t-norm} and $\disj$ is the corresponding \textit{t-conorm}, which take the place of conjunction and disjunction, respectively. Implication $\gamma \rightarrow \gamma'$ could equivalently be defined as usual as $\neg \gamma \disj \gamma'$. Note that De Morgan's rule holds as usual for $\conj$ and $\disj$.

For rational $K \in (0,1]$ we say that a formula $\varphi$ is \emph{$K$-satisfied} by $\nu$ if for every grounding $\gamma$ of $\varphi$ over $\dom$ it holds that $\nu(\gamma) \geq K$. Whenever we make use of $K$ in this context we  assume it to be rational.
In the context of rule-based reasoning it may be of particular interest to observe that an implication is $1$-satisfied exactly when the head is at least as true as the body. For a set of formulas $\Pi$, we say that a truth assignment $\nu$ is a \emph{$K-$fuzzy model} if all formulas in $\Pi$ are $K$-satisfied by $\nu$.

In place of the database in the classical setting, we instead consider \emph{(finite) partial truth assignments}, that is,  partial functions $\tau: GAtoms \to (0,1]$ that are defined for a finite number of ground atoms. 
Let $(\prog, \tau)$ be a pair where $\Pi$ is a set of formulas and $\tau$ is a partial truth assignment, a \emph{$K$-fuzzy model}  of $(\Pi, \tau)$ is a $K$-fuzzy model $\nu$ of $\prog$ where $\nu(G) = \tau(G)$ for every ground atom $G$ for which $\tau$ is defined. Whenever we talk about formulas $\prog$ and partial truth assignments $\tau$ we use $\mathit{ADom}$ to refer to their \emph{active domain}, i.e., the subset of the domain that is mentioned in either $\prog$ or $\{G \in \gatom \mid \tau(G) \text{ is defined.}\}$. We write $\activeatoms$ to indicate $\gatom$ restricted to groundings over $\mathit{Adom}$.

\emph{$K$-fuzzy models} and their theory have been introduced and studied in the context of logic programming by~\cite{DBLP:journals/fss/Ebrahim01} but under different many-valued semantics based on the G\"odel t-norm.
\begin{example}
    Consider a labelling of images as described in the introduction, represented by predicate $\mathit{Label}$. Additionally, we have a predicate corresponding to whether the image was taken in a polar region. Suppose we have the following observations for image $i_1$.
  \[
    \tau(\mathit{Label}(i_1, \mathrm{Whale})) = 0.8 \qquad \tau(\mathit{PolarRegion}(i_1)) = 0.7
  \]
  Now we consider satisfaction of the following ground rule under the semantics studied by~\cite{DBLP:journals/fss/Ebrahim01} in contrast to \luka semantics.
  \[
    \mathit{Label}(i_1, \mathrm{Whale}) \conj  \mathit{PolarRegion}(i_1) \rightarrow \mathit{Orca}(i_1)
  \]
  If we interpret implication and $\conj$ under the semantics proposed by~\citet{DBLP:journals/fss/Ebrahim01}, the body has truth $0.7$ (the G\"odel t-norm of $0.8$ and $0.7$). The implication evaluates to the maximum of $0.3$ ($=1-$ truth of the body) and the truth of the head. Hence, a model that $1$-satisfies the rule would have to satisfy $\mathit{Orca}(i_1)$ with truth degree $1$.
  This can be unintuitive, especially in the context of logic programming, since the truth degree of the fact is not inferred from the truth of the body.

  In \luka semantics, the body has truth degree $0.5$ following the intuition that the conjunction of two uncertain observations becomes even less certain. For the implication to be $1$-satisfied it suffices if $\mathit{Orca}(i_1)$ is true to at least degree $0.5$. We see that the inferred truth degree of the head naturally depends on the truth of the body.
\end{example}

\section{\mvdl and Minimal Fuzzy Models}
\label{sec:mvdl}
In this section we propose the extension of Datalog to \luka semantics for many-valued reasoning. We show that \mvdl allows for an analogue of minimal model semantics in the fuzzy setting. Moreover, we give a characterisation of such minimal models in terms of an optimisation problem over the ground program corresponding to the oblivious chase.

\subsection{\mvdl}

\label{sec:min}
\begin{definition}[\mvdl Program]
  An \emph{\mvdl program} $\prog$ is a set of \Llog formulas of the form
  \[
    B_1 \conj B_2 \conj \dots \conj B_n \rightarrow H
  \] where all $B_i$, for  $1 \leq i \leq n$, and $H$ are relational atoms.
\end{definition}
As noted in Section~\ref{sec:prelims}, we consider a partial truth assignment $\tau$ in the place of a database. We will therefore also refer to $\tau$ as the database. We call a pair $\prog, \tau$ of an \mvdl program and database an \emph{\mvinst}.
We can map a \mvdl program $\prog$ naturally to a Datalog program by substituting $\conj$ by $\land $. We refer to the resulting Datalog program as  $\progcrisp$. For the respective crisp version of $\tau$ we write $D_\tau$ for the database containing all facts $G$ for which $\tau(G)$ is defined and greater than $0$.

Analogous to fact entailment in classical Datalog, we have the central decision problem \ktruth, whether a fact is true to at least a degree $c$ in all models, as follows.
\begin{problem}{\normalsize{\ktruth}}
  Input & \mvinst $(\prog, \tau)$, ground atom $G$, $c \in[0,1]$\\
  Output& $\nu(G) \geq c$ for all $K$-fuzzy models $\nu$ of $(\prog, \tau)$?
\end{problem}
\mvdl is a proper extension of Datalog in the sense that for $K=1$ and when all ground atoms in the database are assigned truth $1$, its models coincide exactly with those of Datalog programs.

Note that in contrast to Datalog, \mvinsts do not always have models. For example, consider a program consisting only of rule $R(x) \rightarrow S(x)$. For any $K > 0$, there is a database $\tau$ that assigns truth $1$ to $R(a)$ and some truth less than $K$ to $S(a)$ such that the rule is not $K$-satisfied.
This is a consequence of the definition of a $K$-fuzzy model $\nu$ of $(\prog, \tau)$, which requires the truth values in $\nu$ to agree exactly with $\tau$, for every fact for which $\tau$ is defined. In some settings it may be desirable to relax this slightly and consider $K$-fuzzy models $\nu$ where $\nu(G) \geq \tau(G)$ where $\tau(G)$ is defined\footnote{Considering $\nu(G) \leq \tau(G)$ allows the trivial model setting everything to truth $0$}.
Our semantics cover such a relaxation since it can be simulated by straightforward rewriting of the program: for every relation symbol $R$ that occurs in $\tau$ add rule $R(\xbf) \rightarrow R'(\xbf)$, and replace all other occurrences of $R$ in the program by $R'$.

\citet{DBLP:journals/fss/Ebrahim01} introduced minimal  $K$-fuzzy models to study logic programming with the G\"odel t-norm, as the intersection of all $K$-fuzzy models, where truth values of an intersection are taken to the minimum between the two values. We will use the following alternative (but equivalent) definition here that will be more convenient in our setting. For two truth assignment $\nu$, $\mu$ we  write $\nu \leq \mu$ when for all $G \in \gatom$, it holds that $\nu(G) \leq \mu(G)$. We similarly write $\nu < \mu$ if $\nu \leq \mu$ and $\nu(G)<\mu(G)$ for at least one $G \in \gatom$.

\begin{definition}
  Let $\prog, \tau$ be an \mvinst. We say that a $K$-fuzzy model $\mu$ of
  $(\prog, \tau)$ is \emph{minimal} if for every $K$-fuzzy model $\nu$
  of $(\prog, \tau)$ it holds that $\mu \leq \nu$.
\end{definition}

 Minimal $K$-fuzzy models behave very similar to minimal models in Datalog. In particular, the two following propositions state two key properties allows us to focus on a single minimal fuzzy model for the problem \ktruth.

\begin{theorem}
  \label{thm:main}
  Let $\prog$, $\tau$ be an \mvinst. For every rational $K\in (0,1]$, if $(\prog, \tau)$ is $K$-satisfiable, then there exists a unique minimal $K$-fuzzy model for $(\prog, \tau)$.
\end{theorem}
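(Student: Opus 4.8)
The plan is to regard the set of $K$-fuzzy models as a convex polytope and show it has a least element in the pointwise order. The first step is to rewrite $K$-satisfaction of a single ground rule as a \emph{linear} constraint. Unfolding the \Llog{} semantics, a truth assignment $\nu$ $K$-satisfies a ground instance $B_1 \conj \dots \conj B_n \rightarrow H$ iff $\min\{1,\,1 - \max\{0,\sum_{i=1}^n \nu(B_i) - (n-1)\} + \nu(H)\}\ge K$; since $K\le 1$ the outer $\min$ is vacuous, and since $1-K+\nu(H)\ge 0$ the inner $\max$ is vacuous too, so the condition is equivalent to the linear inequality
\[
  \nu(H)\ \ge\ \sum_{i=1}^n \nu(B_i)\ -\ n\ +\ K .
\]
Thus the set $\mathcal M$ of $K$-fuzzy models of $(\prog,\tau)$ is exactly the set of assignments satisfying one such inequality per ground instance of a rule of $\prog$, together with the equalities $\nu(G)=\tau(G)$ for every $G$ on which $\tau$ is defined; by hypothesis $\mathcal M\ne\emptyset$.

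Next I would restrict attention to the active domain. For any ground instance of a rule mentioning a constant outside the active domain, safety (head variables occur in the body, and constants of $\prog$ or $\tau$ are themselves active) forces some body atom $B_i$ to lie outside $\activeatoms$ and outside the domain of $\tau$; setting all such atoms to $0$ makes that body evaluate to $0$, so the instance is trivially $1$-satisfied. Hence every assignment satisfying the constraints that involve only atoms of $\activeatoms$ extends, by $0$ elsewhere, to a member of $\mathcal M$, and conversely every member of $\mathcal M$ restricts to such an assignment. Since $\prog$ and $\tau$ are finite, $\activeatoms$ is finite, so we may regard a $K$-fuzzy model as determined by its restriction to $\activeatoms$, and $\mathcal M$ so restricted is a nonempty, hence compact, polytope in $[0,1]^{\activeatoms}$.

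The key step is to show $\mathcal M$ is closed under pointwise minimum: if $\nu_1,\nu_2\in\mathcal M$ and $\mu(G)=\min\{\nu_1(G),\nu_2(G)\}$, then $\mu$ still satisfies every $\tau$-equality and lies in $[0,1]^{\activeatoms}$, and for each ground instance, choosing $j$ with $\mu(H)=\nu_j(H)$ we get $\mu(H)=\nu_j(H)\ge \sum_i\nu_j(B_i)-n+K\ge\sum_i\mu(B_i)-n+K$ since $\mu(B_i)\le\nu_j(B_i)$; hence $\mu\in\mathcal M$. Given this, set $\mu^\ast(G)=\min_{\nu\in\mathcal M}\nu(G)$ for each of the finitely many $G\in\activeatoms$, each minimum attained by compactness, say at $\nu_G\in\mathcal M$. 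Then $\mu^\ast$ is the pointwise minimum of the finitely many $\nu_G$, so $\mu^\ast\in\mathcal M$ by closure, and by construction $\mu^\ast\le\nu$ for every $\nu\in\mathcal M$. Extended by $0$ outside the active domain, $\mu^\ast$ is the desired minimal $K$-fuzzy model; uniqueness is immediate, since $\mu\le\mu'$ and $\mu'\le\mu$ give $\mu=\mu'$.

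I expect the crux — and the only place where \luka{} semantics, rather than, say, the G\"odel t-norm, is essential — to be the linearisation of the rule constraint, which turns $\mathcal M$ into a polytope and in particular makes it closed under pointwise minima; the rest (compactness, least element of a meet-closed compact set, the active-domain reduction) is routine. The one mildly delicate point is ensuring that the infinitely many ground atoms outside the active domain impose no obstruction to forming a pointwise minimum, which is precisely what the active-domain reduction handles.
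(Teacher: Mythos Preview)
Your proof is correct, and the core lemma --- closure of the set of $K$-fuzzy models under pointwise minimum --- is the same as the paper's. The packaging differs: the paper shows directly that the pointwise \emph{infimum} of an arbitrary (possibly infinite) family of $K$-fuzzy models is again a $K$-fuzzy model, and then takes the infimum over all models; no finiteness, compactness, or active-domain reduction is invoked. Your route instead reduces to the finite-dimensional cube $[0,1]^{\activeatoms}$, uses compactness to attain each coordinate-wise minimum at some witness $\nu_G$, and then applies closure under binary $\min$ finitely often. This is more scaffolding than necessary --- the infimum argument subsumes it in one stroke --- but your polytope framing has the virtue of anticipating the linear-programming characterisation that the paper develops next. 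One small caveat on your closing remark: closure under pointwise minima is not special to \luka{} semantics (it holds under the G\"odel t-norm as well, cf.\ Ebrahim); what \luka{} semantics buys here, and later in the paper, is the \emph{linearisation} of the constraints, not the meet-closure per se.
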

\begin{proof}
  We show that for every set of $K$-fuzzy models $S$, the infimum of $S$, defined as
  the truth assignment $\nu'$ with $\nu'(G) = \inf_{\nu\in S} \nu(G)$ for all $G \in \gatom$, is also a $K$-fuzzy model for $(\prog, \tau)$.

  For any ground rule $\gamma$ and $\nu \in S$, the body truth value $\nu(\body(\gamma))$ is given by $\max\{0, \left(\sum_{i=1}^\ell \nu(G_i)\right) - \ell + 1\}$ where the body of $\gamma$ consists of ground atoms $G_1,\dots,G_\ell$. Hence, since we have $\nu'(G_i)\leq \nu(G_i)$ for every $1\leq i\leq \ell$ by construction, it also holds that $\nu'(\body(\gamma))\leq \inf_{\nu \in S} \nu(\body(\gamma))$.

  Therefore also $\nu'(\body(\gamma)) \leq (1-K)+\inf_{\nu \in S} \nu(\head(\gamma))$, since all $\nu \in S$ are $K$-fuzzy models.
  That is, there exists a rational $r = \max\{0, \nu'(\body(\gamma))-1+K\}$ (recall that we only consider rational $K$) such that $r$ is less or equal to $\nu(\head(\gamma))$ for every $\nu \in S$ and $\nu'(\body(\gamma)) \leq (1-K)+  r$.
    Then, $\nu'(\head(\gamma)) = \inf_{\nu \in S} \nu(\head(\gamma)) \geq r$ and therefore also $\nu'(\body(\gamma)) \leq (1-K)+  \nu'(\head(\gamma))$, i.e., $\nu'$ $K$-satisfies $\gamma$.
    We see that $\nu'$ is indeed a $K$-fuzzy model and clearly for every $\nu \in S$ we also have $\nu' \leq \nu$. Taking the infimum over all $K$-fuzzy models of $(\prog, \tau)$ then yields the unique minimal $K$-fuzzy model for $(\prog, \tau)$.
\end{proof}

\begin{proposition}
  \label{prop:obv}
  Let $G$ be a ground atom, $c \in [0,1]$, and $\mu$ the minimal $K$-fuzzy model of $(\prog, \tau)$. Then $\mu(G) \geq c$ if and only if $\nu(G) \geq c$ for all $K$-fuzzy models of $(\prog, \tau)$.
\end{proposition}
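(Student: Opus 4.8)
The plan is to derive this directly from Theorem~\ref{thm:main} and the definition of a minimal $K$-fuzzy model. Note first that the statement implicitly presupposes that a minimal $K$-fuzzy model $\mu$ of $(\prog,\tau)$ exists, which by Theorem~\ref{thm:main} is exactly the case when $(\prog,\tau)$ is $K$-satisfiable; we work under that assumption, so in particular $\mu$ is itself a $K$-fuzzy model of $(\prog,\tau)$.

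For the forward direction, I would assume $\mu(G) \geq c$ and let $\nu$ be an arbitrary $K$-fuzzy model of $(\prog,\tau)$. By the defining property of minimality, $\mu \leq \nu$, which by the definition of $\leq$ on truth assignments means $\mu(G') \leq \nu(G')$ for every $G' \in \gatom$; instantiating at $G' = G$ gives $\nu(G) \geq \mu(G) \geq c$. Since $\nu$ was arbitrary, $\nu(G) \geq c$ holds for all $K$-fuzzy models. For the converse, assuming $\nu(G) \geq c$ for every $K$-fuzzy model $\nu$ of $(\prog,\tau)$, it suffices to take $\nu = \mu$, which is legitimate since $\mu$ is itself a $K$-fuzzy model, yielding $\mu(G) \geq c$.

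I expect no real obstacle here: the entire content sits in Theorem~\ref{thm:main} (existence and uniqueness of the minimal model) together with unfolding the pointwise order on truth assignments. The only point requiring a moment's care is to make explicit that the proposition is ill-posed unless a minimal model exists, so the hypothesis of $K$-satisfiability is implicitly in force throughout.
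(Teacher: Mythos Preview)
Your argument is correct and is precisely the intended one: the paper states this proposition without proof (it is labelled \texttt{prop:obv} for a reason), relying implicitly on exactly the unfolding of minimality and Theorem~\ref{thm:main} that you give. Your remark that $K$-satisfiability is tacitly assumed is also apt.
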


Of particular importance to our main result is the fact that fuzzy models are, in a sense, only more fine-grained versions of classical models. In particular, since the classical case corresponds to every observation having maximal truth degree, fuzzy models are in a sense bounded from above by classical models. 
\begin{lemma}
  \label{lem:crispub}
  Let $(\prog, \tau)$ be an \mvinst. Let $\nu$ be a minimal $K$-fuzzy model of $(\prog, \tau)$. Then for every $G \in \gatom$, we have $\nu(G)>0$ only if $\progcrisp, D_\tau \models G$.
\end{lemma}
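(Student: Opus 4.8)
The plan is to argue by a \emph{clamping} construction. Let $M$ be the (unique) minimal model of the Datalog program $(\progcrisp, D_\tau)$; recall from the preliminaries that $\progcrisp, D_\tau \models G$ is equivalent to $G \in M$, so it suffices to show $\nu(G) > 0$ only if $G \in M$. Given the minimal $K$-fuzzy model $\nu$ (which exists by Theorem~\ref{thm:main} whenever $(\prog,\tau)$ is $K$-satisfiable; otherwise the statement is vacuous), I would define a truth assignment $\nu''$ by $\nu''(H) = \nu(H)$ if $H \in M$ and $\nu''(H) = 0$ otherwise, and then show $\nu''$ is itself a $K$-fuzzy model of $(\prog,\tau)$. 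Since $\nu'' \leq \nu$ pointwise by construction, minimality of $\nu$ forces $\nu \leq \nu''$ and hence $\nu = \nu''$; consequently $\nu(H) = 0$ for every $H \notin M$, which is exactly the claim.

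The first routine checks are that $\nu''$ is an admissible candidate. It respects the database: if $\tau(H)$ is defined then $\tau(H) \in (0,1]$, so $H \in D_\tau \subseteq M$, and therefore $\nu''(H) = \nu(H) = \tau(H)$. And $\nu'' \leq \nu$ is immediate. The core step is verifying that $\nu''$ $K$-satisfies every ground instance $\gamma$ of a rule $B_1 \conj \dots \conj B_n \rightarrow H$. I would split on whether all ground body atoms of $\gamma$ lie in $M$. If some ground body atom is not in $M$, its $\nu''$-value is $0$, so the \luka body truth value $\max\{0, \sum_{i=1}^n \nu''(B_i) - n + 1\}$ collapses to $0$ and the implication evaluates to $\min\{1, 1 + \nu''(\head(\gamma))\} = 1 \geq K$. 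If, on the other hand, every ground body atom lies in $M$, then because $M$ classically satisfies $\progcrisp$, the ground head atom also lies in $M$; hence $\nu''$ agrees with $\nu$ on \emph{every} atom occurring in $\gamma$, so $\nu''(\gamma) = \nu(\gamma) \geq K$.

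Putting these together: $\nu''$ is a $K$-fuzzy model with $\nu'' \leq \nu \leq \nu''$, so $\nu = \nu''$ and $\nu(G) > 0 \Rightarrow G \in M \Rightarrow \progcrisp, D_\tau \models G$.

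I do not expect a serious obstacle here. The only point requiring care is the interaction with the database in the case analysis — one must be sure that clamping to $0$ outside $M$ never breaks a rule whose head carries a $\tau$-prescribed truth value. This is handled because such head atoms belong to $D_\tau \subseteq M$ and are left untouched by the clamping, so the ``all body atoms in $M$'' case reduces cleanly to the fact that $\nu$ already $K$-satisfies $\gamma$. (One should also note that since rules are safe and $D_\tau$ is finite, $M$ mentions only active-domain constants, so atoms over the rest of $\dom$ are automatically outside $M$ and set to $0$, consistently with the argument above.)
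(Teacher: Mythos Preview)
Your argument is correct, and it is genuinely different from the paper's. The paper proceeds ``forward'' from an atom $G$ with $\nu(G)>0$: it argues that minimality forces the existence of a supporting ground rule whose body atoms all have positive truth, unfolds this into a derivation tree, and claims the tree must bottom out in database facts (otherwise one could zero out the unsupported atoms, contradicting minimality); the tree is then read classically to conclude $\progcrisp, D_\tau \models G$. Your clamping construction goes the other way: you start from the classical minimal model $M$, restrict $\nu$ to $M$, and verify in one case split that this restriction is still a $K$-fuzzy model, so minimality gives $\nu = \nu''$ outright. Your route is shorter and more self-contained --- in particular it sidesteps the somewhat informal step in the paper's proof where the derivation tree is asserted to be finite. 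The paper's approach, on the other hand, is more operational and foreshadows the proof-tree style reasoning used later (e.g., in Lemma~\ref{lem:cycle}), so it fits the surrounding development a bit more naturally even if it is less crisp as a stand-alone argument.
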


\subsection{Characterising Minimal Fuzzy Models}

Minimal $K$-fuzzy models are a natural tool for query answering. In the following we show that we can in fact characterise the minimal $K$-fuzzy model for $(\prog, \tau)$ in terms of a simple linear program $\optk$ defined over the ground rules induced by the oblivious chase sequence\footnote{While the use of the oblivious chase without existential quantification is untypical we explicitly require all rules induced by oblivious applications for our characterisation.} for $\progcrisp, D_\tau$.

Let $(\prog, \tau)$ be an \mvinst.
Let  $\Gamma=\{\gamma_1, \dots, \gamma_m\}$ be the ground rules in $\oground(\progcrisp, D_\tau)$. Let $\mathcal{G}=\{G_1,\dots,G_n\}$ be all of the ground atoms occurring in rules in $\Gamma$.
For every $G_i \in \mathcal{G}$ we associate $G_i$ with a variable $x_i$ in our linear program that intuitively will represent the truth degree of $G_i$.
For $\gamma_j$ of the form $G_{j_1},\dots,G_{j_\ell} \rightarrow G_{j_\mathit{head}}$ define
\[
  \Vf(\gamma_j) := \sum_{k=1}^\ell \left(1-x_{j_k} \right)+  x_{j_{\mathit{head}}} 
\]
which directly expresses the satisfaction of rule $\gamma_j$, with variable $x_{j_k}$ representing the truth of $G_{j_k}$.
The linear program $\optk$ is then defined as follows 
\begin{equation}
  \begin{array}{llr}
    \text{minimise} & \sum_{i=1}^n x_i  \\
    \text{subject to} 
                    & \Vf(\gamma_j) \geq K & \text{for } 1 \leq j \leq m\\
                    & x_i = \tau(G_i) & \text{for $i$ where $\tau(G_i)$ is defined}\\
                    & 1 \geq x_i \geq 0& \text{for } 1 \leq i \leq n
  \end{array}
  \label{eq:baselp}
\end{equation}

Recall, a \emph{solution} of a linear program is any assignment to the
variables, a \emph{feasible solution} is a solution that satisfies all
constraints, and an \emph{optimal solution} is a feasible solution
with minimal value of the objective function. With this in mind we can state our main result for \mvdl, a characterisation of minimal $K$-fuzzy models in terms of optimal solutions of $\optk$.
\begin{theorem}
  \label{thm:key}
  Let $\prog$, $\tau$ be an \mvinst. Then $\xbf$ is an optimal solution of $\optk$ if and only if $\nu_\xbf$ is a minimal $K$-fuzzy model of $(\prog,\tau)$
\end{theorem}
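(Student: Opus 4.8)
The plan is to route both directions through a tight dictionary between feasible solutions of $\optk$ and $K$-fuzzy models of $(\prog,\tau)$, and then let Theorem~\ref{thm:main} (uniqueness of minimal models) finish the job.

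First I would record the local fact that makes $\optk$ a faithful encoding of $K$-satisfaction: for a ground rule $\gamma_j\in\Gamma$ with body atoms $G_{j_1},\dots,G_{j_\ell}$ and head $G_{j_\mathit{head}}$ and a truth assignment $\nu$, reading each $x_{j_k}$ as $\nu(G_{j_k})$, the constraint $\Vf(\gamma_j)\geq K$ holds if and only if $\nu$ $K$-satisfies $\gamma_j$. This is a one-line case split on the \luka connectives: if $\sum_k\nu(G_{j_k})\leq\ell-1$ the body has truth $0$, so the implication has truth $1\geq K$ and also $\Vf(\gamma_j)\geq 1$; otherwise $\nu(\body(\gamma_j))=\sum_k\nu(G_{j_k})-\ell+1$ and $\Vf(\gamma_j)=1-\nu(\body(\gamma_j))+\nu(G_{j_\mathit{head}})$ is the uncapped \luka value of the implication, so (as $K\leq 1$) the constraint and $K$-satisfaction again coincide.

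Next I would set up the dictionary in two halves. \emph{(A)} If $\nu$ is any $K$-fuzzy model of $(\prog,\tau)$, then $(\nu(G_1),\dots,\nu(G_n))$ is feasible for $\optk$: every $\gamma_j\in\Gamma$ is a grounding of a program rule and hence $K$-satisfied, so the rule constraints hold by the fact above; $\nu$ agrees with $\tau$, so the equality constraints hold; the box constraints are trivial. \emph{(B)} Conversely, if $\xbf$ is feasible for $\optk$, then $\nu_\xbf$ is a $K$-fuzzy model of $(\prog,\tau)$. I expect this to be the crux, because $\xbf$ constrains only the finitely many rules of $\Gamma$, whereas a $K$-fuzzy model must $K$-satisfy every grounding of every program rule over the whole (infinite) domain $\dom$. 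The argument rests on the structure of the oblivious chase of $\progcrisp,D_\tau$: write $S$ for the set of ground atoms on which $\nu_\xbf$ can be nonzero; then $S\subseteq\mathit{OLim}(\progcrisp,D_\tau)$, and, since $\progcrisp$ has no existential heads, $\Gamma$ is exactly the set of groundings of program rules all of whose body atoms lie in $\mathit{OLim}(\progcrisp,D_\tau)$ (a fair oblivious chase eventually fires precisely those). Hence, for an arbitrary grounding $\gamma$ of a program rule: if some body atom of $\gamma$ lies outside $S$ then $\nu_\xbf$ assigns it $0$, so the body of $\gamma$ has truth $0$ and $\gamma$ is $K$-satisfied trivially; and if all body atoms of $\gamma$ lie in $S\subseteq\mathit{OLim}(\progcrisp,D_\tau)$ then $\gamma\in\Gamma$, so feasibility together with the fact above gives $K$-satisfaction. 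That $\nu_\xbf$ agrees with $\tau$ is immediate from the equality constraints and the definition of $\nu_\xbf$.

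With (A) and (B) in hand the theorem follows quickly. From minimal to optimal: if $\nu_\xbf$ is a minimal $K$-fuzzy model, then $\xbf=(\nu_\xbf(G_1),\dots,\nu_\xbf(G_n))$ is feasible by (A), and if some feasible $\xbf'$ had a strictly smaller objective value then (B) would make $\nu_{\xbf'}$ a $K$-fuzzy model with $\nu_\xbf\leq\nu_{\xbf'}$, forcing $\sum_i x_i\leq\sum_i x'_i$, a contradiction; so $\xbf$ is optimal. From optimal to minimal: an optimal $\xbf$ is in particular feasible, so by (B) $(\prog,\tau)$ is $K$-satisfiable and Theorem~\ref{thm:main} supplies the unique minimal $K$-fuzzy model $\mu$. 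By (A), the restriction $\mu|_{\mathcal{G}}$ is feasible, and it is optimal, since any feasible $\xbf'$ with a strictly smaller objective value would give (by (B)) a $K$-fuzzy model $\nu_{\xbf'}\geq\mu$, contradicting $\sum_i x'_i<\sum_i\mu(G_i)$. Comparing the two optimal solutions $\xbf$ and $\mu|_{\mathcal{G}}$: $\mu\leq\nu_\xbf$ gives $\mu(G_i)\leq x_i$ for all $i$, and equal objective values then force $x_i=\mu(G_i)$ for all $i$; since $\mu$ and $\nu_\xbf$ also coincide outside $\mathcal{G}$ (both equal $\tau$ on the database atoms, and any atom not derivable by $\progcrisp,D_\tau$ has value $0$ in $\nu_\xbf$ and, by Lemma~\ref{lem:crispub}, in $\mu$), we conclude $\nu_\xbf=\mu$; hence $\nu_\xbf$ is the minimal $K$-fuzzy model.
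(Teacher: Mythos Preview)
Your proof is correct and your dictionary (A)/(B) is essentially the paper's Lemma~\ref{lem:opt}. The substantive difference lies in how the two directions are closed. The paper treats the direction ``$\nu_\xbf$ minimal $\Rightarrow$ $\xbf$ optimal'' as the hard one and proves it via Lemma~\ref{lem:cycle}: assuming a strictly better feasible $\ybf$, it isolates the set $\mathcal{G}'$ of atoms where $\nu_\ybf$ undercuts $\mu$, and Lemma~\ref{lem:cycle} produces a $\mu$-tight ground rule with head in $\mathcal{G}'$ and body outside $\mathcal{G}'$, which $\nu_\ybf$ then fails to $K$-satisfy. You instead dispatch that direction in one line (minimality of $\nu_\xbf$ forces $x_i\leq x'_i$ coordinatewise, contradicting a smaller objective) and shift the work to ``$\xbf$ optimal $\Rightarrow$ $\nu_\xbf$ minimal'', where you invoke Theorem~\ref{thm:main} to materialise the unique minimal model $\mu$, show $\mu|_{\mathcal G}$ is itself optimal by the same one-line argument, and conclude $\xbf=\mu|_{\mathcal G}$ from equal objectives plus $\mu\leq\nu_\xbf$. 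Your route is genuinely more elementary: it avoids the tight-rule combinatorics of Lemma~\ref{lem:cycle} entirely, at the cost of appealing to the infimum construction behind Theorem~\ref{thm:main}. The paper's route, by contrast, is self-contained at the level of the LP and yields Lemma~\ref{lem:cycle} as a structural byproduct that may be of independent use. Your appeal to Lemma~\ref{lem:crispub} to pin down $\mu$ outside $\mathcal{G}$ is the right finishing touch and mirrors the paper's opening observation.
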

By construction, any feasible solutions of \optk induces a $K$-fuzzy
model $\nux$ that assigns $\nux(G_i)=x_i$ and $0$ to all other ground
atoms not in $\mathcal{G}$. Similarly, in the other direction we can
observe that any $K$-fuzzy model $\nu$ of $(\prog, \tau)$ can be
used to construct a feasible solution of $\optk$. This natural correspondence of solutions of $\optk$ and $K$-fuzzy models of $(\prog, \tau)$ is formalised in the following lemma.
\begin{lemma}
  \label{lem:opt}
  Let $\prog$, $\tau$ be an \mvinst. The following statements hold:
  \begin{enumerate}
  \item For any feasible solution $\xbf$ of $\optk$, $\nux$ is a $K$-fuzzy model of $(\prog, \tau)$.
  \item For any $K$-fuzzy model $\nu$ of $(\prog, \tau)$, the solution $\xbf$ with $x_i = \nu(G_i)$ is a feasible solution of $\optk$.
  \item
       $\optk$ is feasible if and only if $(\prog, \tau)$ is $K$-satisfiable.
  \end{enumerate}
 \end{lemma}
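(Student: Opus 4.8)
The plan is to prove statements (1) and (2) directly; statement (3) is then immediate, since ``$\Rightarrow$'' follows by applying (1) to any feasible solution of $\optk$ and ``$\Leftarrow$'' by applying (2) to any $K$-fuzzy model of $(\prog,\tau)$. The common device behind (1) and (2) is the observation that for a ground rule $\gamma_j$ of the form $G_{j_1}\conj\dots\conj G_{j_\ell}\rightarrow G_{j_{\mathit{head}}}$ and \emph{any} truth assignment $\nu$, unfolding the \luka semantics of $\conj$ and $\rightarrow$ yields
\[
  \nu(\gamma_j)\;=\;\min\Bigl\{1,\;\textstyle\sum_{k=1}^{\ell}\bigl(1-\nu(G_{j_k})\bigr)+\nu(G_{j_{\mathit{head}}})\Bigr\},
\]
i.e.\ $\nu(\gamma_j)=\min\{1,\Vf(\gamma_j)\}$ when $\Vf$ is read with each $x_i$ instantiated to $\nu(G_i)$. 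Since $\Vf(\gamma_j)\geq 0$ always and $K\leq 1$, this means $\nu$ $K$-satisfies $\gamma_j$ if and only if the linear constraint $\Vf(\gamma_j)\geq K$ holds at $x_i=\nu(G_i)$. I would establish this identity first, as a short calculation with $\max\{0,\cdot\}$ and $\min\{1,\cdot\}$.

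For (1), given a feasible $\xbf$ we let $\nux$ assign $x_i$ to each $G_i\in\mathcal G$ and $0$ to all other ground atoms, and must check that $\nux$ agrees with $\tau$ and $K$-satisfies every grounding over all of $\dom$ of every rule of $\prog$. Agreement with $\tau$ is exactly the constraint block $x_i=\tau(G_i)$ (every atom on which $\tau$ is defined may be assumed to occur among the $G_i$, adding it to $\mathcal G$ otherwise). For the rules I would split a grounding $\gamma$ into two cases. If the corresponding ground rule lies in $\oground(\progcrisp,D_\tau)$, then $\gamma$ is one of the $\gamma_j$ and the identity above together with $\Vf(\gamma_j)\geq K$ gives $\nux(\gamma_j)\geq K$. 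Otherwise, $\gamma$ was never produced by the oblivious chase, and the crucial claim is that some body atom $G$ of $\gamma$ does not occur in $\mathcal G$, so that $\nux(G)=0$, the \luka body value collapses to $0$, and $\gamma$ is even $1$-satisfied. I would prove this claim by contrapositive: if every body atom of $\gamma$ lay in $\mathcal G$, then since $\mathcal G\subseteq\mathit{OLim}(\progcrisp,D_\tau)$ (each atom of $\mathcal G$ appears in, or is the head of, a triggered ground rule) all body atoms of $\gamma$ belong to $\mathit{OLim}$; because $\progcrisp$ has no existentials and $\mathit{ADom}$ is finite the oblivious chase is finite, so $\mathit{OLim}=D_N$ for some $N$, the rule underlying $\gamma$ is obliviously applicable to $D_N$ under the associated homomorphism, and a terminated oblivious chase has applied every applicable pair — hence the ground rule of $\gamma$ is in $\oground(\progcrisp,D_\tau)$, a contradiction.

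For (2), let $\nu$ be a $K$-fuzzy model of $(\prog,\tau)$ and set $x_i=\nu(G_i)$. The box constraints $0\leq x_i\leq 1$ hold since $\nu$ ranges in $[0,1]$; the constraints $x_i=\tau(G_i)$ hold since $\nu$ is a model of $(\prog,\tau)$ and hence agrees with $\tau$; and each $\Vf(\gamma_j)\geq K$ holds because $\gamma_j$ is a grounding of a rule of $\prog$, hence $K$-satisfied by $\nu$, which by the identity of the first paragraph is exactly $\Vf(\gamma_j)\geq K$. So $\xbf$ is feasible. Statement (3) then follows: a feasible solution of $\optk$ gives a $K$-fuzzy model of $(\prog,\tau)$ by (1), and a $K$-fuzzy model of $(\prog,\tau)$ gives a feasible solution of $\optk$ by (2).

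I expect the main obstacle to be the second case of part (1): showing that groundings of program rules which the oblivious chase never fires are automatically $K$-satisfied by $\nux$. This is precisely where the choice of the \emph{oblivious} chase (which records every triggered instance, including non-productive ones) pays off, and where one must carefully combine termination of the chase for existential-free programs with the saturation of its limit to conclude that any unfired grounding must have a body atom outside the active domain. The remaining pieces — the \luka-to-linear identity and the routine direction (2) — are straightforward by comparison.
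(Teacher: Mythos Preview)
Your proposal is correct and follows essentially the same route as the paper: both arguments hinge on the identification of $\Vf(\gamma_j)$ with (a cap of) $\nu(\gamma_j)$, split part~(1) into ground rules inside versus outside $\oground(\progcrisp,D_\tau)$, and obtain~(3) as an immediate corollary of~(1) and~(2). If anything, your write-up is more careful than the paper's sketch---you make explicit the $\min\{1,\cdot\}$ in the $\Vf$/truth-value identity and spell out, via saturation of the finite oblivious chase, why an unfired grounding must have a body atom outside $\mathcal G$ (the paper's ``both head and body have truth $0$'' overstates the head case, though only the body being $0$ is needed).
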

We say that a ground rule
$\gamma \in \Gamma$ is \emph{$\nu$-tight} if $\nu(\gamma)=K$, that is
$\nu(\body(\gamma))-1+K = \nu(\head(\gamma))$. When a rule is not tight we refer to $\nu(\head(\gamma)) - (\nu(\body(\gamma))-1+K)$ as the \emph{$\nux$-gap} of the rule. The gap is at least $0$ for any $K$-satisfied ground rule. We will be interested
in the structural relationships between tight rules.
We refer to the ground atoms in $\mathcal{G}$ for which $\tau$ is not defined as the \emph{derived ground atoms} of $\optk$.
\begin{lemma}
  \label{lem:cycle}
  Let $\nu$ be a minimal model of an \mvinst $\prog, \tau$. Let $\mathcal{G}'$ be a non-empty set of ground atoms for which $\nu$ assigns truth greater than $0$.
  Then there exists at least one $\nu$-tight ground rule in $\oground(\progcrisp, D_\tau)$ such that $\head(\gamma)\in \mathcal{G}'$ and $\body(\gamma)\cap \mathcal{G}' = \emptyset$.
\end{lemma}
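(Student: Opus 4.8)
The plan is to prove the contrapositive by a local perturbation argument. Suppose, for contradiction, that every $\nu$-tight ground rule $\gamma \in \oground(\progcrisp, D_\tau)$ whose head lies in $\mathcal{G}'$ also has a body atom in $\mathcal{G}'$. I would then lower the truth degrees of all atoms in $\mathcal{G}'$ by a common small amount and show the result is still a $K$-fuzzy model of $(\prog,\tau)$, contradicting the minimality of $\nu$. (I read $\mathcal{G}'$ as a set of \emph{derived} ground atoms — atoms outside the domain of $\tau$ — as suggested by the preceding discussion; this is needed because database facts have their degree fixed by $\tau$ and cannot be lowered.)

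\textbf{Building the smaller model.} Concretely, set $\nu'(G) = \nu(G) - \varepsilon$ for $G \in \mathcal{G}'$ and $\nu'(G) = \nu(G)$ otherwise. I would pick $\varepsilon>0$ smaller than $\min_{G\in\mathcal{G}'}\nu(G)$ (to keep $\nu'$ non-negative) and smaller than the $\nu$-gap of every ground rule $\gamma \in \oground(\progcrisp, D_\tau)$ with $\head(\gamma)\in\mathcal{G}'$ and $\body(\gamma)\cap\mathcal{G}'=\emptyset$; by the contradiction hypothesis each such rule is non-tight, so its gap is strictly positive. Both families are finite — $\mathcal{G}'$ is finite since $\nu(G)>0$ forces $\progcrisp, D_\tau\models G$ by Lemma~\ref{lem:crispub}, and $\oground(\progcrisp, D_\tau)$ is finite for a Datalog program — so a valid $\varepsilon$ exists. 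Since $\mathcal{G}'$ avoids the domain of $\tau$, $\nu'$ still agrees with $\tau$ wherever $\tau$ is defined, and $\nu'<\nu$ as $\mathcal{G}'\neq\emptyset$.

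\textbf{Checking the model condition.} It remains to verify that $\nu'$ $K$-satisfies every grounding of $\prog$. Groundings not in $\oground(\progcrisp, D_\tau)$ have some body atom $G$ with $\progcrisp, D_\tau\not\models G$, hence $\nu(G)=0$ by Lemma~\ref{lem:crispub}, so $G\notin\mathcal{G}'$ and the body's \luka t-norm is $0$ under $\nu'$ too, making the rule value $1$. For $\gamma\in\oground(\progcrisp, D_\tau)$ I would distinguish three cases: (i) $\head(\gamma)\notin\mathcal{G}'$, where the head degree is unchanged and the body t-norm can only decrease, so satisfaction improves; (ii) $\head(\gamma)\in\mathcal{G}'$ with $\body(\gamma)\cap\mathcal{G}'\neq\emptyset$, where the head drops by $\varepsilon$ but the body t-norm drops by at least $\varepsilon$ (or has already reached $0$), which combined with $\nu(\body(\gamma))\le 1-K+\nu(\head(\gamma))$ keeps the rule $K$-satisfied; (iii) $\head(\gamma)\in\mathcal{G}'$ with $\body(\gamma)\cap\mathcal{G}'=\emptyset$, where the body is unchanged and the head drops by $\varepsilon$, so the $\nu$-gap drops by exactly $\varepsilon$ and stays $\ge 0$ by the choice of $\varepsilon$. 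This produces a $K$-fuzzy model of $(\prog,\tau)$ strictly below $\nu$, the desired contradiction.

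\textbf{Expected main obstacle.} I expect case (ii) to be the only place needing care, because of the $\max\{0,\cdot\}$ truncation in the \luka t-norm: one must argue separately when the perturbed body t-norm hits $0$ (the rule value is then simply $1$) and otherwise use that the $k\ge 1$ body occurrences of atoms in $\mathcal{G}'$ make the body drop by $k\varepsilon\ge\varepsilon$. The one conceptual ingredient is the restriction to the finitely many rules of $\oground(\progcrisp, D_\tau)$ via Lemma~\ref{lem:crispub}; after that it is a routine $\varepsilon$-perturbation argument.
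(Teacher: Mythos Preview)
Your proposal is correct and follows essentially the same perturbation argument as the paper: assume the conclusion fails, uniformly lower the truth degrees of atoms in $\mathcal{G}'$ by a small $\varepsilon$ (the paper uses $\max\{0,\nu(G)-\delta\}$ with $\delta$ the minimal relevant gap, you pick $\varepsilon$ below that gap and below $\min_{G\in\mathcal{G}'}\nu(G)$, which amounts to the same thing), and verify by the same case split that the result is still a $K$-fuzzy model. Your observation that $\mathcal{G}'$ must be read as consisting of derived atoms is well taken; the paper's proof tacitly uses this too (``every derived $G\in\mathcal{G}$ is in the head of some ground rule''), and indeed this is how the lemma is applied in the proof of Theorem~\ref{thm:key}.
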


\begin{proof}[Proof of Theorem~\ref{thm:key}]
  Observe, that by Lemma~\ref{lem:crispub} and the fact that the chase produces all atoms that can be entailed by $(\prog, \tau)$, it follows that we do not have to consider solutions with atoms other than the set $\mathcal{G}$ considered in the definition of $\optk$.

  For the if direction, let $\mu$ be a minimal model of $(\prog, \tau)$ with $\mu(G) = 0$ for all ground atoms $G \not \in \mathcal{G}$. By Lemma~\ref{lem:opt}, $\mu$ induces a feasible solution $\xbf$ of $\optk$ where $x_i = \mu(G_i)$. Suppose this solution is not optimal and let $\ybf$ be an optimal solution of $\optk$. Let $\mathcal{G}' \subseteq \mathcal{G}$ be the set of ground atoms $G$ where $\nuy(G) < \mu(G)$. Since we assume that $\sum y_i < \sum x_i$, this set is not empty. By Lemma~\ref{lem:cycle} there exists a $\mu$-tight rule $\gamma$ where the head is in $\mathcal{G}'$ but none of the body atoms of $\gamma$ are. It follows that
  \[
    \nuy(\head(\gamma)) < \mu(\head(\gamma)) = \mu(\body(\gamma)) -1+K= \nuy(\body(\gamma)) -1+K
  \]
  Hence, $\nuy$ does not $K$-satisfy $\gamma$. By Lemma~\ref{lem:opt} this contradicts the assumption that $\ybf$ is a feasible solution with lower objective than $\xbf$.

  For the other direction, let $\xbf$ be an optimal solution of
  $\optk$. Suppose now that $\nux$ is not a minimal $K$-fuzzy model of
  $(\prog, \tau)$. Hence, there exists some model $\nu'$ with
  $\nu' < \nux$. Let $\ybf$ be the solution of $\optk$ induced by $\nu'$. Since at least one atom is less true, and none are more true in $\nu'$ we have $\sum y_i < \sum x_i$, contradicting the optimality of $\xbf$ for $\optk$.
\end{proof}

Since the oblivious chase is polynomial with respect to data complexity (without existential quantification) and fi§nding optimal solutions of linear programs is famously polynomial~\cite{khachiyan1979polynomial}, our characterisation of minimal $K$-fuzzy models also directly reveals the complexity of \ktruth. Recall that for $K=1$, \ptime-hardness is inherited from classical Datalog (see~\cite{DBLP:journals/csur/DantsinEGV01}). 

\begin{corollary}
  Fix a rational $K\in (0,1]$. 
  \ktruth is in \ptime with respect to data complexity. Moreover, $1$-\textsc{Truth} is \ptime-complete in data complexity. 
\end{corollary}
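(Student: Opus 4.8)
The plan is to decide \ktruth by computing the (unique) minimal $K$-fuzzy model of $(\prog,\tau)$ through the linear program $\optk$, and then to check that every step of this procedure runs in polynomial time under data complexity. Given $(\prog,\tau)$, $G$ and $c$, I would first run the oblivious chase on $\progcrisp,D_\tau$ to obtain the ground program $\oground(\progcrisp,D_\tau)$ (the characterisation needs \emph{all} ground rules induced by oblivious applications, which is exactly what this chase yields even though it carries no existentials), then assemble $\optk$ precisely as in~\eqref{eq:baselp} and solve it. If $\optk$ is infeasible then by Lemma~\ref{lem:opt}(3) there is no $K$-fuzzy model of $(\prog,\tau)$, so the universally quantified query holds vacuously and the answer is ``yes''. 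Otherwise, fix an optimal solution $\xbf$; by Theorem~\ref{thm:key} the induced assignment $\nux$ is a minimal $K$-fuzzy model, and by Proposition~\ref{prop:obv} it then suffices to test whether $\nux(G)\geq c$, i.e.\ whether $x_i\geq c$ when $G=G_i\in\mathcal{G}$ and whether $c=0$ otherwise.

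For the complexity, I would use that in data complexity $\prog$ is fixed, so the maximal arity and the maximal number of variables occurring in a rule are constants, while $\mathit{ADom}$ has size linear in $|\tau|$. Consequently there are only polynomially many ground atoms over $\mathit{ADom}$ and polynomially many candidate homomorphisms, and since applying a Datalog rule introduces no fresh constants, the oblivious chase terminates after polynomially many steps; hence $\oground(\progcrisp,D_\tau)$ — and therefore $\optk$ — has polynomially many variables and constraints with coefficients of polynomial bit-length. An exact rational optimal solution, or a certificate of infeasibility, of such a linear program can be computed in polynomial time~\cite{khachiyan1979polynomial}, and uniqueness of the minimal $K$-fuzzy model (Theorem~\ref{thm:main}) together with Theorem~\ref{thm:key} guarantees that all optimal solutions of $\optk$ agree on the coordinates in $\mathcal{G}$, so it does not matter which optimum the solver returns. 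The final comparison is between rationals of polynomial size. This places \ktruth in \ptime with respect to data complexity.

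For \ptime-hardness of $1$-\textsc{Truth}, I would reduce from fact entailment in plain Datalog, which is \ptime-complete in data complexity~\cite{DBLP:journals/csur/DantsinEGV01}. Given a fixed Datalog program $\Pi$, a database $D$ and a fact $G$, view $\Pi$ as an \mvdl program and let $\tau$ map every fact of $D$ to truth $1$ and be undefined elsewhere; this is clearly computable in logarithmic space, and $\progcrisp=\Pi$, $D_\tau=D$. It then remains to show $\Pi,D\models G$ iff the minimal $1$-fuzzy model $\mu$ of $(\Pi,\tau)$ has $\mu(G)\geq 1$. For the forward direction I would induct on the stages of the classical least-fixpoint computation of the minimal model of $(\Pi,D)$: base atoms receive $\mu$-value $1$ from $\tau$, and whenever a ground rule instance fires with all body atoms at $\mu$-value $1$ its body has truth $1$, so $1$-satisfaction of that rule (head at least as true as body) forces its head to $\mu$-value $1$; hence $\mu(G)=1$. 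For the converse, Lemma~\ref{lem:crispub} gives $\mu(G)>0$ only if $\progcrisp,D_\tau\models G$, so $\Pi,D\not\models G$ implies $\mu(G)=0<1$. Together with the membership argument, $1$-\textsc{Truth} is \ptime-complete in data complexity.

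The only steps that require care, rather than being immediate from the cited results, are (i) verifying that the oblivious chase without existential quantification really stays polynomial, which hinges on the fixed-program convention of data complexity so that no new constants ever appear, and (ii) insisting that $\optk$ be solved \emph{exactly} over the rationals rather than approximately, so that the truth degree of $G$ can be read off and compared to $c$ precisely. I expect (i) to be the main point worth stating explicitly; the hardness reduction is essentially just the observation, already noted, that for $K=1$ with an all-$1$ database the models of an \mvdl program coincide with those of the underlying Datalog program.
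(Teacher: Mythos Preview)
Your proposal is correct and follows essentially the same approach as the paper, which justifies the corollary in a single sentence by appealing to the polynomiality of the oblivious chase (without existentials) in data complexity, the polynomiality of linear programming~\cite{khachiyan1979polynomial}, and the characterisation via $\optk$ (Theorem~\ref{thm:key}), with hardness inherited from classical Datalog~\cite{DBLP:journals/csur/DantsinEGV01}. Your write-up simply spells out the details the paper leaves implicit---the infeasible case, the use of Proposition~\ref{prop:obv} and Lemma~\ref{lem:crispub}, and the need for exact rational LP solutions---and your inductive argument for the forward direction of the hardness reduction is exactly the content of Theorem~\ref{thm:certain}.
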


To conclude this section we briefly note why it is important to use the oblivious chase as a basis for the ground rules that make up $\optk$. Consider the program $P(x), Q(x) \rightarrow S(x)$ and $R(x) \rightarrow S(x)$. In plain Datalog, $S(a)$ can be inferred simply if either $P(a)$ and $Q(a)$, or $R(a)$ are in the database. However, in the many-valued setting the truth of $S(a)$ depends on the truth of all three facts, $P(a)$, $Q(a)$, and $R(a)$ since it needs to be at least as true as $P(x) \conj Q(a)$ and at least as true $R(a)$. Technically this means $\optk$ needs to consider all possible paths of inferring each ground atom, and this exactly corresponds to the oblivious chase applying rules once per homomorphism from the body into the instance.

\subsection{Classical Behaviour for Certain Knowledge}
We are particularly interested in settings where one wants to  combine large amounts of certain knowledge, e.g., classical knowledge graphs and databases, with additional fuzzy information.
Going back to our example from the introduction, we have such a scenario if we want to reason on properties of the animals in the pictures by using pre-existing biological and zoological ontologies.
Ideally, we want inference on the ontologies to behave exactly as in the Datalog and only differ in behaviour when such \emph{certain knowledge} is combined with fuzzy observations and consequences of fuzzy observations. Note that particularly in the context of ontologies the question of fuzziness has also been studied extensively in the field of fuzzy description logics, see e.g., \cite{DBLP:conf/sum/BorgwardtP17,DBLP:journals/jair/StoilosSPTH07,DBLP:journals/ws/LukasiewiczS08}. Notably, the $\mathcal{EL}$ family of descrioption logics, which is closely related to rule based languages, has also been studied specifically with fuzzy semantics based on \luka logic and shown undecidable by~\cite{DBLP:conf/dlog/BorgwardtCP17}.

We show now that \mvdl indeed exhibits such ideal behaviour. In particular, when we consider Lemma~\ref{lem:crispub} but with a database consisting only of certain knowledge in $\tau$, i.e, ground atoms for which $\tau$ evaluates to $1$,  we can state the following.

\begin{theorem}
  \label{thm:certain}
  Let $\prog$, $\tau$ be a \mvdl program and database. Let $D^1$ be the classical database containing all $G\in \gatom$ where $\tau(G) =1$.
  If $\prog, \tau$ is $1$-satisfiable, 
  then for every ground atom $G$, $\progcrisp, D^1 \models G$ if and only if $\mu(G)=1$ where $\mu$ is the minimal $1$-fuzzy model of $(\prog,\tau)$.
\end{theorem}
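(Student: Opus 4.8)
The plan is to pin down the set $S = \{H \in \gatom : \mu(H) = 1\}$ of atoms that the minimal $1$-fuzzy model $\mu$ believes with full certainty, and to show that $S$ is exactly the minimal classical model $M$ of $(\progcrisp, D^1)$. Here $\mu$ exists by Theorem~\ref{thm:main}, since $\prog,\tau$ is assumed $1$-satisfiable (this is the only place the hypothesis is used). Since $\progcrisp, D^1 \models G$ holds precisely when $G$ is true in the minimal model of $(\progcrisp, D^1)$, i.e.\ when $G \in M$, establishing $S = M$ proves both directions of the theorem at once. I would prove the two inclusions separately.

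For $M \subseteq S$, I would show that $S$ is itself a classical model of $(\progcrisp, D^1)$, hence contains the least one. It contains $D^1$ because $\mu$ must agree with $\tau$ and every atom of $D^1$ has $\tau$-value $1$. It is closed under $\progcrisp$: if a grounding $B_1 \land \dots \land B_k \to H$ of a $\progcrisp$ rule has all $B_i \in S$, then the corresponding grounding of the underlying \mvdl rule $B_1 \conj \dots \conj B_k \to H$ has body truth $\max\{0, \sum_i \mu(B_i) - k + 1\} = 1$ under $\mu$, and $1$-satisfaction of that ground rule forces $\mu(H) \ge 1$, i.e.\ $H \in S$. This is just the classical least-fixpoint argument run through the all-ones point of the \luka $t$-norm.

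For $S \subseteq M$, I would argue by contradiction, and this is where Lemma~\ref{lem:cycle} does the work. Suppose $\mathcal{G}' := S \setminus M$ is non-empty. Since $\mu$ assigns truth $1 > 0$ to every atom of $\mathcal{G}'$, Lemma~\ref{lem:cycle} (with $K=1$) yields a $\mu$-tight ground rule $\gamma \in \oground(\progcrisp, D_\tau)$ with $\head(\gamma) \in \mathcal{G}'$ and $\body(\gamma) \cap \mathcal{G}' = \emptyset$. Being $\mu$-tight with $K=1$ unpacks to $\mu(\body(\gamma)) = \mu(\head(\gamma))$, and since $\head(\gamma) \in S$ this common value is $1$; by the shape of the \luka $t$-norm, $\mu(\body(\gamma)) = 1$ forces every body atom of $\gamma$ to have $\mu$-truth $1$, i.e.\ $\body(\gamma) \subseteq S$. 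Combining with $\body(\gamma) \cap \mathcal{G}' = \emptyset$ and the inclusion $M \subseteq S$ already proved, we get $\body(\gamma) \subseteq M$. But $\gamma$ is a grounding of a $\progcrisp$ rule and $M$ is a classical model of $\progcrisp$, so $\head(\gamma) \in M$, contradicting $\head(\gamma) \in \mathcal{G}' = S \setminus M$. Hence $S \setminus M = \emptyset$.

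I expect the main obstacle to be bookkeeping rather than a substantive difficulty: making sure the machinery lines up, namely that the ground rules produced by the oblivious chase of $(\progcrisp, D_\tau)$ are genuine groundings of $\progcrisp$ rules (so that $M$, a model of $\progcrisp$, satisfies them even though it is built from $D^1$ rather than $D_\tau$), that "$\mu$-tight" at $K=1$ really collapses to equality of head and body truth, and that the correspondence between groundings of $\progcrisp$ rules and groundings of the underlying \mvdl rules is faithful. None of these requires real computation; the argument is essentially a transfer of the classical minimal-model reasoning through the $K=1$ specialisation of the fuzzy characterisation, using Lemma~\ref{lem:cycle} as the bridge in the nontrivial direction.
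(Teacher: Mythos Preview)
Your proof is correct. The forward direction ($M\subseteq S$) is the same as the paper's: the paper phrases it as an induction along the oblivious chase sequence of $(\progcrisp,D^1)$, you phrase it as showing that $S$ is a classical model of $(\progcrisp,D^1)$ and hence contains the least one; these are the same argument.

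The reverse direction ($S\subseteq M$) is genuinely different. The paper dispatches it in one line as ``a special case of Lemma~\ref{lem:crispub}'', whereas you go through Lemma~\ref{lem:cycle}. Your route is in fact the more honest one: Lemma~\ref{lem:crispub} as stated only yields $\progcrisp,D_\tau\models G$ from $\mu(G)>0$, and since $D^1\subseteq D_\tau$ this is weaker than the required $\progcrisp,D^1\models G$; to make the paper's one-liner work one must re-run the derivation-tree argument of Lemma~\ref{lem:crispub} with the stronger hypothesis $\mu(G)=1$ and observe that truth~$1$ propagates down to the leaves, forcing them into $D^1$ rather than merely $D_\tau$. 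Your use of Lemma~\ref{lem:cycle} sidesteps this by producing a $\mu$-tight rule whose body lies entirely in $S\setminus\mathcal{G}'=S\cap M\subseteq M$, which is exactly what is needed. The bookkeeping points you flag (tightness at $K=1$ giving $\mu(\body(\gamma))=\mu(\head(\gamma))$ per the paper's explicit definition, and $\gamma$ being a genuine grounding of a $\progcrisp$ rule so that any classical model of $\progcrisp$ respects it) all go through.
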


Beyond the conceptual importance of Theorem~\ref{thm:certain}, the
statement also has significant practical consequences. Our
characterisation of minimal $K$-fuzzy models provides a clear algorithm for computing minimal $K$-fuzzy models via the oblivious chase and then solving $\optk$. Using the oblivious chase is important for our characterisation, since different ground rules with the same head may differ in how true their body is. In classical Datalog it is of course sufficient to infer the head once, which can require significantly less effort in practice. 
According to Theorem~\ref{thm:certain} (and by uniqueness of minimal fuzzy models) it is not necessary to compute the truth degree of those facts that are entailed by $\progcrisp, D^1$ via $\optk$.
An implementation can use this observation to produce less ground rules (i.e., not the full oblivious chase) and smaller equivalent versions of $\optk$ with less computational effort.
We conclude that \mvdl is well suited for combining large amounts of certain knowledge with fuzzy observations. Not only are the semantics on certain data equivalent to those of classical Datalog, but the computation of minimal fuzzy models also exhibits desirable properties in the presence of certain data.

\section{\mvpm}
\label{sec:mvpm}

Existential quantification in the head of rules provide a natural way of dealing with the common problem of incomplete or missing data by declaratively stating that certain facts must exist. Such usage is especially powerful in applications where rule-based reasoning is used for complex data analysis (see e.g., \cite{DBLP:conf/ijcai/BellomariniGPS17}). In the following, we introduce \mvpm as an extension of \mvdl with a focus on providing a fuzzy reasoning language that is useful for such applications.

To obtain semantics matching our intuition for such a system, we propose an alternative to the commonly studied semantics for existential quantification for \luka logic. We then identify certain \emph{preferred models} that exhibit desired conceptual and computational behaviour as the basis for reasoning in \mvpm.

\subsection{Strong Existential Quantification in \luka Semantics}
The semantics of existential quantification in \luka logic is commonly defined as $\nu(\exists \xbf \varphi(\xbf)) = \sup\{\nu(\varphi[\xbf/\cbf]) \mid \cbf \in \dom^{|x|}\}$
where $\varphi[\xbf/\cbf]$ is the substitution of variables $\xbf$ in $\varphi$ by constants $\cbf$ (cf.~\cite{DBLP:books/kl/Hajek98}).

However, with our focus set on practical applications, we propose alternative semantics for existential quantification that match the semantics of $\disj$. To contrast between the aforementioned semantics of $\exists$ we refer to our semantics as \emph{strong existential quantification}.
\[
  \nu(\exists \xbf\ \varphi(\xbf)) = \min\{1, \sum_{\cbf \in \dom^{|x|}} \nu(\varphi[\xbf/\cbf])\}
\]
We refer to $\Llog$ extended with strong existential quantification as $\Llog_\exists$ (following standard syntax of first order existential quantification).
Such semantics for quantification in \luka logic have also recently been studied more generally in a purely logical context by~\cite{10.1215/00294527-2021-0012}. 

\begin{definition}
  A \mvpm program is a set of $\Llog_\exists$ formulas that are either \mvdl rules or of the form
  \[
    B_1 \conj B_2 \conj \cdots \conj B_n \rightarrow \exists \xbf\ \varphi(\xbf, \ybf)
  \]
  where $B_i$ are relational atoms, $\ybf$ are the variables in the body, and $\varphi$ is formula that contains only a relational atom that uses all variables of $\xbf$.
\end{definition}

Note that we generally do not restrict the domain in the semantics of $\Llog$ formulas. Thus, as in \datalogpm, interpretations are not constrained to the active domain of $\prog$ and $\tau$ but allow for the introduction of new constants. As noted above, the introduction of new constants provides a natural mechanism to deductively handle missing and incomplete data. Our proposed use of strong existential quantification is motivated by its behaviour in cases of incomplete information as illustrated in the following example.

\begin{example}
  \label{ex:pm}
  We consider the following example rule, expressing that every company has a key person, to illustrate the difference between the two semantics for $\exists$.
  \[
    \begin{array}{ll}
      \mathit{Company}(y)& \rightarrow \exists x\ \mathit{KeyPerson}(x, y)
    \end{array}
  \]
  Suppose we have the following database, we know for certain that $\mathrm{Acme}$ is a company and  we are $0.8$ degrees confident that $\mathrm{Amy}$ is the key person of $\mathrm{Acme}$.
  \[
    \tau(\mathit{KeyPerson}(\mathrm{Amy}, \mathrm{Acme}))  = 0.8, \qquad \tau(\mathit{Company}(\mathrm{Acme})) = 1
  \]
  We discuss the case in a $1$-fuzzy model in the following. With existential semantics via the supremum of matching ground atoms, the first rule implies the existence of some new $\mathit{KeyPerson}(\mathrm{N}_1, \mathrm{Acme})$ with truth degree $1$. With strong existential quantification, the first rule implies only truth degree $0.2$ for some $\mathit{KeyPerson}(\mathrm{N}_1, \mathrm{Acme})$ since the known observation $\mathit{KeyPerson}(\mathrm{Amy}, \mathrm{Acme})$ already contributes $0.8$ degrees of truth to the head.

  Intuitively, the new constant $\mathrm{N}_1$ assumes the role of the unknown other object that possibly is the key person. In traditional semantics for $\exists$ we have to infer (in a $1$-fuzzy model) that $\mathrm{N}_1$ is certain to control $\mathrm{Acme}$.  In contrast, under strong existential semantics, the confidence in $\mathrm{N}_1$ being the key person is determined by the known observation that $\mathrm{Amy}$ might be the key person. 
\end{example}

\subsection{Preferred Models for \mvpm}
When considered in full generality, our proposed strong existential quantification semantics reveal a variety of theoretical challenges. To allow for $K$-fuzzy models $\nu$ where
$\{G \in \gatom \mid \nu(G)>0\}$ is not finite would require  the introduction of further technical complexity in the definition. Furthermore, it can be necessary for a model to introduce multiple new constants per ground rule to satisfy an existential quantification in certain situations as illustrated by the following example.
\begin{example}
  \label{ex:nulls}
  Consider database $\tau$ with
  \(
    \tau(S(a)) = 0.8, 
    \tau(T(a)) = 0.2
  \)
  and program $\prog$
  \[
    \begin{array}{ll}
      S(x) & \rightarrow \exists y\ P(x, y) \\
      P(x, y) & \rightarrow T(x)
    \end{array}
  \]
  Then there is no $1$-fuzzy model with only one new constant. However, there is a solution
  where the facts $P(a, n_1), P(a, n_2), P(a, n_3), P(a, n_4)$, with new constants $n_1,\dots,n_4$, are all assigned truth degree $0.2$.
\end{example}

Infinite models and the phenomenon illustrated by Example~\ref{ex:nulls} present
interesting topics for further theoretical study. For practical
applications we propose to avoid these corner cases and focus on
natural \emph{preferred models} that avoid these issues and provide
predictable outcomes.

Recall that the oblivious chase sequence introduces new constants exactly once for every possible true grounding of the body. This limitation on the number of new constants provides a natural balance between not overly restricting facts with null values in models while also avoiding extreme situations as in Example~\ref{ex:nulls} where arbitrarily many constants are introduced to satisfy the rule for a single ground body. 
We also restrict minimality of fuzzy models to only those ground atoms without labelled nulls. Minimising also the truth of atoms with labelled nulls would intuitively correspond to a kind of soft closed-world assumption, where we want to infer truth of statements under the assumption that the unknown information is minimal. 

\begin{definition}
  Let $\prog, \tau$ be an \mvpminst.
  Let $\mu$, $\nu$ be two $K$-fuzzy models of $(\prog, \tau)$. We say that $\mu \leq_{a} \nu$ if
  for all $G \in \activeatoms$ it holds that $\mu(G) \leq \nu(G)$.
\end{definition}

\begin{definition}
  We say that a $K$-fuzzy model $\nu$ of \mvpm instance $(\prog, \tau)$ has an \emph{oblivious base} if $\mathit{OLim}(\progcrisp, D_\tau)$ is finite, if for every $G\in\gatom$ it holds that $\nu(G) > 0$ only if $G \in \mathit{OLim}(\progcrisp, D_\tau)$.
  We say that $\nu$ is a \emph{preferred} $K$-fuzzy model if it is minimal with respect to the partial ordering $\leq_{a}$ of all $K$-fuzzy models with an oblivious base .
\end{definition}

This definition of preferred models captures the previously outlined intuition of which kinds of models we want to consider for practical applications. That is, we propose preferred models as the desired output of an \mvpm system. An \mvpminst does not necessarily have a unique preferred model. The question of how to compare preferred models among each other is left as an open question.

We now extend the definition of \optk from Section~\ref{sec:min} to program $\eoptk$ for an \mvpminst $\prog, \tau$. Let $\mathcal{G}$ and $\Gamma$ be as in the previous definition of $\optk$. For ground rules that are the result of grounding a rule without existential quantification, $\Vf$ is defined the same as it was for $\optk$.
Otherwise, let $\gamma$ be a ground rule obtained from grounding of an existential rule. Let $G_h$ be the atom in the head of $\gamma$ and let $N$ be the labelled nulls at the positions in $G_h$ that were existentially quantified in the original rule before grounding. We say that a ground atom $G$ \emph{matches} $G_h$ if $G$ can be obtained from $G_h$ by replacing the labelled nulls in $N$ by some other values of the domain.
Let $G_{\beta_1},\dots, G_{\beta_k}$ be the ground atoms in $\mathcal{G}$ that match $G_h$. We can then extend our previous definition of $\Vf$ to groundings of existential rules as
\[
  \Vf(\gamma) := \sum_{j=1}^\ell \left(1-x_{\alpha_j} \right)+  \sum_{i=1}^k x_{\beta_i}
\]
with variables $x_{\alpha_j}$ and $x_{\beta_i}$ corresponding, respectively, to the truth values of the ground atoms in the body and the head.
Then $\eoptk$ is the linear program defined the same as $\optk$, but with objective function
\begin{equation}
  \begin{array}{llr}
    \text{minimise} & \sum_{i=1}^n w_i x_i  \\
  \end{array}
  \label{eq:eoptk}
\end{equation}
where $x_i$ provides the truth value of $G_i$, $w_i= 1$ if $G_i \in \activeatoms$ and $w_i=0$ otherwise.
That is, ground atoms that contain nulls do not contribute to the objective. Recall from Example~\ref{ex:nulls} that this does not imply that they can be simply set to be fully true.

\begin{theorem}
  \label{thm:lamin}
  Let $\prog$, $\tau$ be a \mvdl program and database where $\mathit{OLim}(\progcrisp, D_\tau)$ is finite. The following two statements hold.
  \begin{enumerate}
  \item   If $(\prog, \tau)$ has a preferred $K$-fuzzy model, then $\eoptk$ is feasible.
  \item    For any  optimal solution $\xbf$ of $\eoptk$, we have that $\nu_\xbf$ is a preferred $K$-fuzzy model for $(\prog, \tau)$.
  \end{enumerate}
\end{theorem}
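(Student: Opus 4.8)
\emph{Overall strategy.} The plan is to mirror the proofs of Lemma~\ref{lem:opt} and Theorem~\ref{thm:key}, extending the correspondence between feasible solutions of the linear program and $K$-fuzzy models to the existential case, and then running the minimality argument against the partial order $\leq_a$ and the weighted objective of $\eoptk$. First I would establish the auxiliary fact that \emph{any} $K$-fuzzy model $\mu$ of $(\prog,\tau)$ possessing an oblivious base induces a feasible solution $\xbf$ of $\eoptk$ by setting $x_i = \mu(G_i)$; statement~1 is then immediate, since a preferred model has an oblivious base by definition. The constraints $x_i = \tau(G_i)$ and $0\leq x_i\leq 1$ hold trivially, and for ground rules coming from non-existential rules the inequality $\Vf(\gamma)\geq K$ is obtained exactly as in the proof of Lemma~\ref{lem:opt}, using that $\sum_j(1-\mu(G_{\alpha_j})) \geq 1-\mu(\body(\gamma))$ with equality when the body has positive truth. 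The new work is the existential ground rules: for $\gamma$ obtained from an existential rule $r$ via a homomorphism $h$, I would argue that because $\mu$ has an oblivious base, the atoms matching the head $G_h$ that receive positive truth all lie in $\mathit{OLim}(\progcrisp,D_\tau)$, hence — by the fresh-null bookkeeping of $\oground$ — are precisely the $G_{\beta_1},\dots,G_{\beta_k}$ of $\eoptk$. Thus $\mu(\exists\xbf\,\varphi(\xbf,h(\ybf))) = \min\{1,\sum_i\mu(G_{\beta_i})\}\leq\sum_i x_{\beta_i}$, and since $\mu$ $K$-satisfies this grounding we get $1-\mu(\body(\gamma))+\mu(\exists\xbf\,\varphi)\geq K$, which together with the body estimate gives $\Vf(\gamma)=\sum_j(1-x_{\alpha_j})+\sum_i x_{\beta_i}\geq K$.

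\emph{Optimal solutions are preferred models.} For statement~2, given an optimal $\xbf$ of $\eoptk$, I would set $\nux(G_i)=x_i$ and $\nux(G)=0$ otherwise. Since $\mathcal G\subseteq\mathit{OLim}(\progcrisp,D_\tau)$ and the latter is finite by hypothesis, $\nux$ has an oblivious base. To see $\nux$ is a $K$-fuzzy model of $(\prog,\tau)$: it agrees with $\tau$ by the equality constraints; and for any grounding of any rule of $\prog$, either some body atom lies outside $\mathcal G$, in which case its truth is $0$ and — as $\conj$ is the \luka t-norm — the whole body has truth $0$ and the rule is trivially $K$-satisfied, or all body atoms lie in $\mathit{OLim}$, in which case fairness of the oblivious chase sequence puts this grounding in $\oground(\progcrisp,D_\tau)$, and unfolding the corresponding constraint $\Vf(\gamma)\geq K$ — carefully re-introducing the $\min\{1,\cdot\}$ caps of the implication and, in the existential case, of strong existential quantification, and using that the matches of $G_h$ in $\mathcal G$ realise the full domain-wide sum — yields $K$-satisfaction. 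It remains to check minimality. Suppose $\nux$ were not preferred; then some $K$-fuzzy model $\nu'$ with an oblivious base has $\nu'\leq_a\nux$ and $\nu'(G_0)<\nux(G_0)$ for some $G_0\in\activeatoms$. By the auxiliary fact, $\nu'$ induces a feasible solution $\ybf$ of $\eoptk$ with $y_i=\nu'(G_i)$. From $\nu'\leq_a\nux$ we get $y_i\leq x_i$ for every $G_i\in\activeatoms$, i.e.\ wherever $w_i=1$; and since $\nux$ vanishes off $\mathcal G$ we have $G_0\in\mathcal G$, with $G_0\in\activeatoms$, so the term for $G_0$ is strictly smaller, giving $\sum_i w_i y_i<\sum_i w_i x_i$ and contradicting optimality of $\xbf$.

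\emph{Expected main obstacle.} The delicate part throughout is faithfully linearising strong existential quantification: one has to check that $\Vf(\gamma)$ really encodes $\nu(\body\to\exists\xbf\,\varphi)$ across all truncation cases — the $\min\{1,\cdot\}$ in the \luka implication and the $\min\{1,\cdot\}$ in the definition of $\nu(\exists\xbf\,\varphi)$ — and, most importantly, that the sum $\sum_{\cbf\in\dom^{|\xbf|}}\nu(\varphi[\xbf/\cbf])$ over the whole infinite domain collapses to the finite sum $\sum_i x_{\beta_i}$ over matches of $G_h$ in $\mathcal G$. This collapse is exactly where the oblivious-base restriction and the structure of $\oground$ (one fresh witness null introduced per true body grounding) are used, and it is also the reason the statement is restricted to programs with finite $\mathit{OLim}(\progcrisp,D_\tau)$. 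Everything else is a routine adaptation of the arguments already given for \mvdl.
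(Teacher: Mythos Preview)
Your proposal is correct and follows essentially the same strategy as the paper's proof: convert a preferred model into a feasible solution by reading off $x_i=\nu(G_i)$, and show an optimal solution is preferred by constructing, from a hypothetical $\leq_a$-smaller oblivious-base model, a feasible solution with strictly smaller weighted objective. In fact you are more careful than the paper in two places: you explicitly verify that $\nu_\xbf$ is a $K$-fuzzy model (the paper's proof jumps straight to the minimality contradiction), and you correctly work through the truncation cases rather than asserting $\Vf(\gamma)=\nu(\gamma)$, which is only true modulo the $\min\{1,\cdot\}$ and $\max\{0,\cdot\}$ caps.
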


From our definition of preferred models we are naturally interested in identifying cases where the oblivious chase is finite. There exist syntactic restrictions for which we can show that this is always the case. In particular, we can identify programs where the oblivious chase is finite based on the notion of weakly acyclic programs as studied by~\cite{DBLP:journals/tcs/FaginKMP05}. Technical details of this question are discussed in \ref{sec:finchase}.

\section{Conclusion \& Outlook}
\label{sec:conclusion}
We have introduced \mvdl and \mvpm, two many-valued rule-based reasoning languages following \luka semantics. We provide theoretical foundations for \mvdl by characterising its minimal models in terms of an optimisation problem and showing that fuzzy fact entailment in \mvdl has the same complexity as fact entailment in Datalog.
For \mvpm we introduce alternative semantics for existential quantification in \luka logic that better match the popular usage of $\exists$ for handling missing data. As a first starting point for the practical use of \mvpm we introduce preferred models and show how to compute them.

We argue that \mvdl and its extensions can be effective languages for reasoning in settings with uncertain data. Therefore the development of a system based on the results given here is the natural next step to validate efficiency and effectiveness in comparison to other formalisms. This also entails the study of how the computation of minimal or preferred models can be further optimised, ideally requiring fewer ground rules than the methods described here.

Beyond practical considerations, a variety of natural theoretical questions have been left open. Prime among them are the questions of how fuzzy models of \mvdl behave with stratified negation, and how to handle the behaviour of \mvpm instances with an infinite oblivious chase. The latter is particularly intriguing since truth degrees effectively decreases monotonically along paths of derivation. We therefore suspect there to be relevant classes of problems where the oblivious chase is infinite but instances always permit finite fuzzy models.

Finally, there is an intriguing possibility of combining \mvdl semantics with the probabilistic semantics of Problog to reason about uncertainty and likelihood as two orthogonal concepts at the same time. 
\vspace{-1em}

\section*{Acknowledgements}
Stefano Sferrazza was supported by the Austrian Science Fund
(FWF):P30930. Georg Gottlob is a Royal Society Research Professor and acknowledges support by the Royal Society in this role through the  “RAISON DATA” project  (Reference No. RP\textbackslash{}R1\textbackslash{}201074). Matthias Lanzinger acknowledges support by the Royal Society  “RAISON DATA” project  (Reference No. RP\textbackslash{}R1\textbackslash{}201074).

\bibliography{0-main.bib}

\newpage
\appendix
\section{Proof Details}
\label{sec:proofs}

\begin{proof}[Proof of Lemma~\ref{lem:crispub}]
  Let $G$ be a ground atom with $\nu(G)>0$.
  Since $\nu$ is minimal there must be a ground rule with $G$ in the head such that for every body atom $G_i \in \body(\gamma)$  also $\nu(G_i) > 0$. Consider the tree with $\gamma$ as the root and a child for every body atom of $\gamma$. At each child take again a rule where the respective body atom is in the head and the body is true to some degree greater than $0$.

  Repeating the process, we must arrive at a finite tree where all leaves must correspond to ground rules where all body atoms are defined in $\tau$ with truth greater than $0$. Otherwise, we could simply set $\nu(G')=0$ for all ground atoms $G'$ whose branches do not lead to such leaves and $\nu$ would still be a $K$-fuzzy model, contradicting its minimality.

  Viewing the same tree in the context of $\progcrisp,D_\tau$ we then see, going from the leaves upward that the bodies of all these rules must be true in all models. Thus, also ultimately $\progcrisp,D_\tau \models G$.
\end{proof}

\begin{proof}[Proof of Lemma~\ref{lem:opt}]
  Let $\xbf$ be a feasible solution of $\optk$.  Observe that $\Vf(\gamma) = \nux(\gamma)$ for every ground rule $\gamma$: Since, $\Vf(\gamma) \geq K$ for all $\gamma \in \Gamma$,  we also see that $\nux$ $K$-satisfies every rule in $\oground(\progcrisp, D_\tau)$. By definition of an oblivious chase sequence and since $\nux(G)=0$ for any ground atom that is not mentioned in $\mathit{OLim}(\progcrisp, D_\tau)$, any other ground rule $\gamma \not \in \Gamma$ is trivially $K$-satisfied since both head and body have truth $0$.

  Let $\nu$ be a $K$-fuzzy model of $(\prog, \tau)$. Let $\xbf$ be a solution of $\optk$ with $x_i = \nu(G_i)$ for all $G \in \mathcal{G}$. Since $\nu$ $K$-satisfies all ground rules, we in particular have that $\nu(\gamma) = \Vf(\gamma) \geq K$ for all $\gamma \in \Gamma$ and therefore $\xbf$ is feasible.
%
   The third statement follows immediately from combination of the first two.
 \end{proof}

\begin{proof}[Proof of Lemma~\ref{lem:cycle}]
  By Lemma~\ref{lem:crispub}, and the fact that the oblivious chase sequence constructs minimal models for Datalog, we also have that $\mathcal{G}' \subseteq \mathit{OLim}(\progcrisp, D_\tau)$.
  It follows by construction that every derived $G\in \mathcal{G}$ in the head of some ground rule in $\oground(\progcrisp, D_\tau)$.
  What is left to show is that at least one such rule is also $\nu$-tight.

  Suppose the statement is false and let $\delta > 0$ be the minimal $\nu$-gap of rules in $\Gamma$ whose body atoms are all not in $\mathcal{G}'$.
  Let $\Gamma'$ contain only the $\nu$-tight rules in $\oground(\progcrisp, D_\tau)$ that contain an atom from $\mathcal{G}'$ in the head. Since we assume the statement false, all of the rules in $\Gamma'$ have a body atom that is also in $\mathcal{G}'$.
  Now consider the truth assignment $\nu'$ defined as
  \[
    \nu'(G) =
    \begin{cases}
      \max\{0, \nu(G)-\delta\} & \text{if } G \in \mathcal{G}' \\
      \nux(G) & \text{otherwise}
    \end{cases}
  \]
 By assumption we have that all $\nu$-tight rules where $\mathcal{G}'$ occurs in the head also have some atom from $\mathcal{G}'$ in the body. Meaning that under $\nu'$ the truth of both head and body are decreased by at least $\delta$ (or until both are $0$). Hence, all rules in $\Gamma'$ are still $K$-satisfied. We can also see that all other rules that contain an atom in $\mathcal{G}'$ remain $K$-satisfied under $\nu'$. When an atom of $\mathcal{G}'$ occurs in the body of a rule, the implication can only become more true. Where an atom $G$ from $\mathcal{G}'$ occurs only in the head of a rule $\gamma \not \in \Gamma'$, we have that the $\nux$-gap of the rule is at least $\delta$. That is, 
  \[
    \nux(G) \geq \nux(\body(\gamma))-1+K+\delta
  \]
  and therefore also
  \[
    \nu'(G) \geq \nux(G)-\delta \geq \nux(\body(\gamma))-1+K \geq \nu'(\body(\gamma))-1+K
  \]
  We therefore see that $\nu'$ is a $K$-fuzzy model with $\nu' < \nu$ and we arrive at a contradiction.
\end{proof}

\begin{proof}[Proof of Theorem~\ref{thm:certain}]
  The implication from right to is a a special case of Lemma~\ref{lem:crispub}.
  For the other direction, we argue by induction over the oblivious chase sequence $D_0,D_1,\dots$ for $\progcrisp, D^1$, that if some ground atom $G$ is in $D_i$, then $\mu(G)=1$ in every $1$-fuzzy model $\mu$ of $(\prog, \tau)$. The base case $D_0 = D^1$ follows by definition of $D^1$. For the induction, suppose the claim holds up to step $i$. In the step from $D_i$ to $D_{i+1}$ either the ground atoms in both are the same (the oblivious application produced a ground head that was already in $D_i$) or there is a single new ground atom $G$ in $D_{i+1}$ but not in $D_i$.  Let $\gamma$ be the ground rule induced by the oblivious application in the step from $i$ to $i+1$. Since the rule was applicable, for all $G' \in \body(\gamma)$ we have  $G' \in D_i$ and by the induction hypothesis also $\mu(G')=1$. Hence, $\mu(\body(\gamma))=1$ and since $\gamma$ must be $1$-satisfied, we have that in every $1$-fuzzy model also $\mu(G)=1$.

  Recall, that $\mathit{OLim}(\progcrisp, D^1)$ is always a minimal model of $\progcrisp, D^1$ and thus $\progcrisp, D^1 \models G$ if and only if $G \in \mathit{OLim}(\progcrisp, D^1)$. And by the above induction argument $G \in \mathit{OLim}(\progcrisp, D^1)$ implies that $\mu(G)=1$.
\end{proof}

\begin{proof}[Proof of Theorem~\ref{thm:lamin}]
  First, observe that since $\mathit{OLim}(\progcrisp, D_\tau)$ is unique up to isomorphism, we can assume without loss of generality that there is no preferred model where some ground atom $G$ is true but not in the set $\mathcal{G}$ of ground atoms considered in the construction of $\eoptk$.

  Let $\nu$ be a preferred $K$-fuzzy model of $\prog,\tau$. Since $\nu$ has an oblivious base, we have that $\nu(G) > 0$ only if $G \in \mathcal{G}$. Let $\xbf$ be the solution of $\eoptk$ where $x_i = \nu(G_i)$ for all $G_i\in \mathcal{G}$. Then, by definition of $\Vf$ we immediately see $\Vf(\gamma) = \nu(\gamma)$ for every ground rule $\gamma$. Hence, if all ground rules are $K$-satisfied by $\nu$, then all constraints $\Vf(\gamma) \geq K$ are satisfied and $\nu$ is feasible.
  
  For the second statement, assume that $\xbf$ is an optimal solution of $\eoptk$ but $\nux$ is not preferred. The only way $\nux$ can not be preferred is if it is not active minimal, i.e., there is some $K$-fuzzy model $\mu$ where for all $G\in\activeatoms$, $\mu(G) \leq \nux(G)$ and for at least one $G' \in \activeatoms$, $\mu(G') < \nux(G')$. Since both $\mu$ and $\nux$ have an oblivious base (which is unique up to isomorphism) we see that every $G \in \gatom$ where $\mu(G)>0$ is also in $\mathcal{G}$. Hence, it is straightforward to construct a feasible solution $\xbf'$ from $\mu$ for which the objective function is strictly lower than for $\mathbf{x}$, a contradiction.
\end{proof}

\section{Finiteness in the Oblivious Chase}
\label{sec:finchase}
Since we are interested particularly in instances where the chase is finite, it is of interest to identify fragments where this is always the case. The most prominent condition for which the finiteness of the chase in \datalogpm is studied, is the restriction to \emph{weakly acyclic programs} as first introduced by~\cite{DBLP:journals/tcs/FaginKMP05}.
Let $\prog$ be a set of rules with over the relational language $\sigma$. We first define the \emph{dependency graph} for $\prog$ as the graph with vertices
\(
  \{ (R, i) \mid R \in \sigma, i \in \{1,\dots,\#(R)\} \}
\)
 We say a variable $x$ is in position $(R, i)$ if $x$ is at the $i$-th place of a relation with name $R$.
There is a (normal) edge from vertex $(R, i)$ to $(S, j)$ exactly if there is a rule in $\prog$ where the same variable $x$ occurs at position $(R,i)$ in the body and at position $(S,j)$ in the head of the rule. There is a \emph{special edge} from $(R,i)$ to $(S,j)$ exactly if there is an existential rule in $\prog$, such that there is a variable $x$ that $x$ occurs in both the body and head, $x$ is in position $(R,i)$ of the body, and an existentially quantified $y$ is in position $(S,j)$ in the head.
We say that a program is \emph{weakly acyclic} if its dependency graph has no cycle that passes a special edge. 

However, the standard definition of weakly acyclic is particular to the restricted chase, which does not fit our setting. This becomes apparent when considering the simple (weakly acyclic) program
\[
  P(x) \rightarrow \exists y\ P(y).
\]
The oblivious chase is infinite in this case, since every new instantiation of $y$ generates a new homomorphism that satisfies the body. Since $x$ does not occur in the head at all, the dependency graph used in the standard definition of weakly-acyclic does not have any edges and is thus weakly acyclic.

Let the \emph{variable expansion} $\ve(\prog)$ of $\prog$ be the program obtained by the following rewriting.
Without loss of generality every existential rule is of the form $\varphi(\xbf) \rightarrow \exists \ybf\  R(\ybf,\xbf')$, where $\varphi$ is the body formula, $R$ is a relation symbol and $\xbf'$ are those (free) variables that occur in both the body and the head.  Let $\xbf^*$ be the free variables that only occur in $\varphi$ but not in the head. Then, for every such rule we replace it by the two rules 
\[
    \begin{array}{ll}
         \varphi(\xbf) &\rightarrow \exists \ybf\ R^*(\ybf, \xbf', \xbf^*) \\
         R^*(\ybf, \xbf', \xbf^*) &\rightarrow R(\ybf, \xbf')
    \end{array}
\]
to obtain $\ve(\prog)$. Intuitively, the variable expansion reveals those particular cycles that are harmless in the restricted chase but lead to infinite sequences in the oblivious chase in the standard definition of weak acyclicity.
For example, the variable expansion of our simple example above would thus be the (no longer weakly acyclic) program
\[
 \begin{array}{ll}
      P(x) & \rightarrow \exists y\ P^*(y, x)  \\
      P^*(y,x) & \rightarrow P(y) 
 \end{array}
\]
 
The following theorem then follows by similar argument as originally given for the restricted chase \cite[Theorem 3.9]{DBLP:journals/tcs/FaginKMP05} by additionally observing that nulls can never propagate to a position in the dependency graph with lower rank, where the rank of a position is the maximal number of special edges on an incoming path. One can then inductively bound the number of possible groundings of the body to consequently bound the number of nulls generated in the head for each position.

\begin{theorem}
  Let $\prog, \tau$ be a \mvpminst
  program.  If $\ve(\prog)$ is weakly acyclic, then $\mathit{OLim}(\progcrisp, D_\tau)$ is finite and of polynomial size in data complexity.
\end{theorem}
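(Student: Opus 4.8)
The plan is to reduce the statement to a rank-based counting argument on the oblivious chase of $\ve(\progcrisp)$, adapting the termination proof of Fagin et al.~\cite{DBLP:journals/tcs/FaginKMP05} from the standard chase to the oblivious one. First I would note that it suffices to bound $\mathit{OLim}(\ve(\progcrisp), D_\tau)$: crispification does not change the rule structure, so $\ve(\progcrisp)$ has the same dependency graph as $\ve(\prog)$ and is therefore weakly acyclic; and there is a routine step-by-step correspondence between the oblivious chase of $\progcrisp$ and that of $\ve(\progcrisp)$ — each oblivious application of an existential rule $\varphi \to \exists\ybf\, R(\ybf,\xbf')$ via a homomorphism $h$ matches the application of $\varphi \to \exists\ybf\, R^*(\ybf,\xbf',\xbf^*)$ via $h$ followed by the (deterministic) firing of $R^*(\ybf,\xbf',\xbf^*) \to R(\ybf,\xbf')$ on the resulting atom — so the $\sigma$-reduct of $\mathit{OLim}(\ve(\progcrisp), D_\tau)$ is isomorphic to $\mathit{OLim}(\progcrisp, D_\tau)$ and the two instances have polynomially related size. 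From here on write $\prog' := \ve(\progcrisp)$.

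Next I would set up the rank machinery. For a position $(R,i)$ of $\prog'$, define its \emph{rank} as the maximum number of special edges on any directed path of the dependency graph of $\prog'$ ending at $(R,i)$; since $\prog'$ is weakly acyclic this is a well-defined natural number bounded by the number of positions $P$, a constant in data complexity. Two structural observations are the crux. (i) In $\prog'$, \emph{every} body variable of a rewritten existential rule is a frontier variable, because the head atom $R^*(\ybf,\xbf',\xbf^*)$ literally mentions all body variables; hence the head position occupied by a freshly created null has an incoming special edge from \emph{every} body position of that rule, so a null born at a rank-$r$ position is produced by a rule all of whose body positions have rank $\le r-1$. (ii) A value never migrates to a position of strictly smaller rank: a non-fresh value only ever moves along normal edges (the sole use of a special edge is the birth of a fresh null, which is a different value), and normal edges are rank-nondecreasing; so by induction on the length of the chase, a null born at rank $r$ only ever occupies positions of rank $\ge r$. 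Combining (i) and (ii), the nulls that ever occupy a position of rank $\le r$ are exactly the nulls born at rank $\le r$, and these arise only from existential rules whose bodies sit at ranks $\le r-1$.

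With this in hand I would prove, by induction on $r = 0,1,\dots,P$, that the set $V_r$ of all values appearing at some position of rank $\le r$ anywhere in the chase has size at most $p_r(n)$ for a polynomial $p_r$ depending only on $\prog$, where $n = |\mathit{ADom}|$. Base case: the oblivious chase never invents constants and (by (i)–(ii)) every null occupies positions of rank $\ge 1$, so $V_0 \subseteq \mathit{ADom}$ up to the constantly many nulls created by empty-body rules, giving $|V_0| \le n + O(1)$. Step: a value at a rank-$\le r$ position is either in $\mathit{ADom}$ or a null born at rank $\le r$; the latter come from existential rules whose body variables all sit at rank $\le r-1$, so each such rule has at most $|V_{r-1}|^{w} \le p_{r-1}(n)^{w}$ triggering homomorphisms, where $w$ bounds the number of body variables, and summing over the constantly many existential rules yields $|V_r| \le n + O(1) + c\cdot p_{r-1}(n)^{w} =: p_r(n)$. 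Taking $r = P$ bounds the number of distinct values in the whole chase by a polynomial in $n$; since the relations and arities are fixed, only polynomially many atoms — hence only polynomially many applicable (rule, homomorphism) pairs — can ever occur, so the monotone oblivious chase terminates and $\mathit{OLim}(\prog', D_\tau)$, and therefore $\mathit{OLim}(\progcrisp, D_\tau)$, is finite and of polynomial size in data complexity.

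I expect the main obstacle to be establishing the two structural observations about the dependency graph of $\ve(\progcrisp)$ — in particular verifying carefully that the variable-expansion rewriting forces every body position of an existential rule to lie strictly below, in rank, the head position of any null it creates. This is precisely what makes the oblivious chase behave, for termination purposes, like the standard chase on a weakly acyclic program, and is where the ``additional observation'' flagged before the theorem statement does its work; once it is in place, the rank induction, the polynomial bookkeeping, and the $\progcrisp$-versus-$\ve(\progcrisp)$ correspondence are routine.
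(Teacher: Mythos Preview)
Your proposal is correct and follows essentially the same rank-based induction that the paper sketches, namely the Fagin--Kolaitis--Miller--Popa argument adapted via the observation that nulls never propagate to positions of lower rank, with an inductive bound on body groundings yielding the bound on generated nulls. You spell out more than the paper does---in particular the explicit reduction from $\mathit{OLim}(\progcrisp,D_\tau)$ to $\mathit{OLim}(\ve(\progcrisp),D_\tau)$ and the observation that the variable expansion forces every body variable of an existential rule to be frontier (so every body position lies strictly below the created null in rank)---which is precisely the ``additional observation'' the paper alludes to before the theorem; the remainder is the same argument.
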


\end{document}